\documentclass[a4paper,USenglish,cleveref,autoref,thm-restate]{lipics-v2021}

\usepackage[algoruled,vlined,linesnumbered]{algorithm2e}
\usepackage{multicol}

\usepackage{tikz}
\usetikzlibrary{arrows.meta, positioning}

\newcommand*{\C}{\mathcal{C}}
\newcommand*{\D}{\mathcal{D}}
\newcommand{\Exp}{\mathbb{E}}
\newcommand*{\G}{\mathcal{G}}
\newcommand*{\nwspace}{\hspace*{.1em}} 
\newcommand*{\Otilde}{\widetilde{O}}
\newcommand*{\Prob}{\mathbb{P}}

\let\oldsqrt\sqrt
\def\hksqrt{\mathpalette\DHLhksqrt}
\def\DHLhksqrt#1#2{\setbox0=\hbox{$#1\oldsqrt{#2\,}$}\dimen0=\ht0
   \advance\dimen0-0.2\ht0
   \setbox2=\hbox{\vrule height\ht0 depth -\dimen0}%
   {\box0\lower0.4pt\box2}}
\renewcommand\sqrt\hksqrt

\renewcommand{\leq}{\leqslant}
\renewcommand{\geq}{\geqslant}
\renewcommand{\le}{\leqslant}
\renewcommand{\ge}{\geqslant}

\title{Simpler and Improved Replacement Path Coverings}

\author{Davide Bilò}{University of L'Aquila, Italy}{davide.bilo@univaq.it}{https://orcid.org/0000-0003-3169-4300}{}
\author{Shiri Chechik}{Tel Aviv University, Israel}{shiri.chechik@gmail.com}{}{This project received funding from the
European Research Council (ERC) under the European Union's Horizon
2020 Research and Innovation program,
grant agreement No.~803118
``The Power of Randomization in Uncertain Environments''.}
\author{Keerti Choudhary}{Indian Institute of Technology Delhi, India}{keerti@iitd.ac.in}{https://orcid.org/0000-0002-8289-5930}{The author is supported by the Indian Science and Engineering Research Board (SERB) under the Mathematical Research Impact-Centric Support (MATRICS) scheme, grant agreement No.~MTR/2025/001601.}
\author{Sarel Cohen}{Reichman University, Israel}{sarel.cohen@runi.ac.il}{https://orcid.org/0000-0002-8289-5930}{}
\author{Martin Schirneck}{Karlsruhe Institute of Technology, Germany}{martin.schirneck@kit.edu}{https://orcid.org/0000-0001-7086-5577}{The author is supported by the German Research Foundation (DFG),
grant agreement No.~556899211 ``Design, Analysis, and Engineering of Enumeration Algorithms''.}

\authorrunning{D.~Bilò, S.~Chechik, K.~Choudhary, S.~Cohen, and M.~Schirneck}

\Copyright{Davide Bilò, Shiri Chechik, Keerti Choudhary, Sarel Cohen, and Martin Schirneck}

\ccsdesc[500]{Theory of computation~Data structures design and analysis}
\ccsdesc[300]{Theory of computation~Pseudorandomness and derandomization}
\ccsdesc[100]{Mathematics of computing~Graph algorithms}
\keywords{derandomization, fault tolerance, replacement path coverings, sensitivity data structures} 
\category{} 

\relatedversion{} 

\nolinenumbers 
\hideLIPIcs

\begin{document}

\maketitle

\begin{abstract}
	An important tool in the design of fault-tolerant graph data structures
	are $(L,f)$\emph{-replacement path coverings} (RPCs). 
	An RPC is a family $\mathcal{G}$ of subgraphs of a given graph $G$ such that,
	for every set $F$ of at most $f$ edges,
	there is a subfamily $\mathcal{G}_F \,{\subseteq}\, \mathcal{G}$ with the following properties.
	\begin{enumerate}
		\item No subgraph in $\mathcal{G}_F$ contains an edge of $F$.
		\item For each pair of vertices $s,t$ that have a shortest path in $G{-}F$ 
		with at most $L$ edges,\\ one such path also exists in some subgraph in $\mathcal{G}_F$.
	\end{enumerate}
	The \emph{covering value} of the RPC is the total number $|\mathcal{G}|$ of subgraphs.
	The \emph{query time} is the time needed to compute the subfamily $\mathcal{G}_F$ given the set $F$.

	Weimann and Yuster [TALG'13] devised a randomized RPC 
	with covering value $\widetilde{O}(fL^f)$ and query time $\widetilde{O}(f^2 L^f)$.
	This was derandomized by Karthik and Parter [TALG'24],
	who also reduced the query time to $\widetilde{O}(f^2 L)$.
	Their approach uses some heavy algebraic machinery involving error-correcting codes
	and an increased covering value of $O((cfL \log n)^{f+1})$ for some constant $c > 1$. 
	We instead devise a much simpler derandomization via conditional expectations that
	lowers the covering value back to $\widetilde{O}(fL^{f+o(1)})$ 
	and decreases the query time to $\widetilde{O}(f^{5/2}L^{o(1)})$,
	assuming $f = o(\log L)$.

	We also investigate the optimal covering value of any $(L,f)$-replacement path covering
	(deterministic or randomized) for different parameter ranges.
	We provide a new randomized construction as well as improving a known lower bound, 
	also by Karthik and Parter.
	For example, for $f = o(\log L)$, we give an RPC with $\widetilde{O}( (L/f)^f\, L^{o(1)})$ subgraphs
	and show that this is tight up to the $L^{o(1)}$ term.
\end{abstract}

\section{Introduction}
\label{sec:intro}

Graphs are powerful models in computer science representing various types of relationships 
between entities encountered in different applications.
While a wide range of algorithms and data structures has been developed for static graphs, 
where edges and vertices remain unchanged over the whole lifetime of the application,
real-world networks are often subject to failures. 
Many traditional graph algorithms need to recompute their solutions entirely 
when some component of the input changes,
even if it is only a small number edges.
In many practical cases, there is an a priori upper bound on the number of simultaneous failures.
Moreover, while it may be unpredictable where the faults occur, 
they are often transient due to an inherent repair mechanism in the network.
This motivates the \emph{fault-tolerant} or \emph{sensitivity} setting in data structure research.
There, a preprocessing algorithm is given the underlying graph without failures and a \emph{sensitivity} parameter~$f$.
Queries to the data structure then specify up to $f$ edges
and the task is to quickly report the properties of the graph in which the specified edges failed.
Over the last two decades, substantial advancements have been made in developing such \emph{sensitivity oracles}
for fundamental graph problems like connectivity~\cite{DuanP09a, DuanP10, DuanP:17, PatrascuT:07}, shortest paths~\cite{Bilo24ApproxDSOSubquadraticTheoretiCS, BiChChCoFrScFOCS24, ChCo20, ChCoFiKa17,  ChoS024, DeThChRa08, DeyGupta24, DuanP09a, DuRe22, GuRen21}, diameter and excentricity~\cite{Bilo23STDiameter,Bilo22Extremal,HenzingerL0W17}, and routing~\cite{CLPR12}.
Recently, similar ideas have also been applied to \textsf{NP}-hard problems, such as vertex cover,
$k$-path~\cite{AlmanHirsch22ExteriorAlgebras,Bilo22FixedParameterSensitivityOracles}, 
or $k$-clique~\cite{Bilo25IndexingSubnetworks}.

We are concerned with \emph{$f$-edge fault-tolerant distance sensitivity oracles} ($f$-DSOs),
which are sensitivity oracles for pairwise graph distances.
In more detail, an $f$-DSO for a graph $G = (V,E)$ is queried with triplets $(s,t,F)$ consisting of two vertices $s,t \in V$  and a set $F \subseteq E$ of at most $f$ edges.
The output of the oracle is the length $d(s,t,F)$ of the shortest path from $s$ to $t$
in the modified graph $G{-}F$.
This value $d(s,t,F)$ is called the \emph{replacement distance} and any shortest $s$-$t$-path in $G{-}F$ is a \emph{replacement path}.
Weimann and Yuster, in their seminal work~\cite{WY13}, presented a non-trivial $f$-DSO
supporting multiple edge failures. 
It has separate logic and data structures for so-called hop-short and 
hop-long replacement paths.
Here, a replacement path is \emph{hop-short} if it has at most $L$ edges,
where $L$ is some positive integer parameter.
For such paths, Weimann and Yuster gave an insightful construction that is now known as an 
$(L,f)$-\emph{replacement path covering} (RPC).\footnote{%
	The name was introduced later by Karthik and Parter in (the conference version of)~\cite{KarthikParter24DeterministicRPC_TALG}.
}
They defined a collection $\G$ of $\Otilde(f L^f)$ random subgraphs\footnote{%
	We use $n$ for the number of vertices of the input graph $G$, and $m$ for the number of its edges.
	For a positive function $g(m,n,L,f)$, we let $\Otilde(g)$ stand for $O(g \cdot \textsf{poylog}(n))$.
} of $G$ that with high probability\footnote{%
	\emph{With high probability} (w.h.p.)
	means with probability at least $1-n^{-c}$ for some constant $c > 0$.
} has the following property.
Whenever two vertices $s,t$ and a set $F$ of at most $f$ edges are such that
$s$ and $t$ indeed have a hop-short replacement path in $G{-}F$,
then there exists a subgraph $G^* \in \G$ such that $G^*$ contains no edge of $F$
and at least one such replacement path is retained in $G^*$.

The usefulness of RPCs for distance sensitivity oracles stems from the following observation.
Suppose we are given a query set $F$.
Scanning through $\G$ and filtering for those subgraphs that have no edge of $F$ gives a subfamily $\G_F$.
For vertices $s$ and $t$, let $\widehat{d}_F(s,t)$ be the minimum $s$-$t$-distance among all graphs in $\G_F$.
$G^* \in \G_F$ guarantees that $\widehat{d}_F(s,t)$ 
is equal to the \emph{true} replacement distance $d(s,t,F)$
whenever $s$ and $t$ have a replacement path with at most $L$ edges.
Even if $s$ and $t$ only have hop-long paths,
we still have $\widehat{d}_F(s,t) \ge d(s,t,F)$.
The computed value never underestimates the replacement distance
since any shortest path that contributed to $\widehat{d}_F(s,t)$
only uses edges from $G{-}F$.
A caveat of the construction in~\cite{WY13} is that,
in order to find $\G_F$, all graphs of the family $\G$ need to be scanned,
taking time $\Otilde(f^2 L^f)$.

Several applications of replacement path coverings
have since been explored in the context of DSOs~\cite{AlonChechikCohen19CombinatorialRP, Bilo24ApproxDSOSubquadraticTheoretiCS, ChCo20, GrandoniVWilliamsFasterRPandDSO_journal, KarthikParter24DeterministicRPC_TALG}, 
$k$-path sensitivity oracles~\cite{Bilo25IndexingSubnetworks}, 
fault-tolerant spanners~\cite{BraunschvigCPS15, DinitzK11, DR20}, and fault-tolerant strong-connectivity preservers~\cite{CC20}. 
Additionally, the concept has also been utilized in distributed computing~\cite{CPT20, HitronPater21BroadcastCONGEST, Par19a, PY19a, PY19b}.
In light of these applications, we adopt the following definition
that is equivalent to the one by Weimann and Yuster~\cite{WY13} but focuses on the subfamily $\G_F$
instead of the individual subgraph $G^*$.
The idea is that $\G_F$ provides good estimates for \emph{all} pairs $(s,t)$ simultaneously.
We have to formulate the definition carefully
so that it also applies to settings in which there are several shortest 
$s$-$t$-paths in $G{-}F$, some of which may have more than $L$ edges.


\begin{definition}[replacement path coverings, covering value, query time]
\label{def:RPC}
	Let $L$ and $f$ be positive integers and $G = (V, E)$ a graph.
	An $(L,f)$\emph{-replacement path covering} for $G$ is a family $\G$ of spanning subgraphs of $G$
	that has a subfamily $\G_F \,{\subseteq}\, \G$ for every set $F \,{\subseteq}\, E$ of $|F| \,{\leqslant}\, f$ edges
	such that the following two properties hold.
	\vspace*{.25em}
 	\begin{enumerate}
 		\item No subgraph in $\G_F$ contains an edge of $F$.
 		\vspace*{.25em}
 		\item For all $s,t \in V$ such that there exists a shortest path from $s$ to $t$ in $G{-}F$  
 			with at most $L$ edges, 
 			at least one subgraph in $\G_F$ also has such a path.
 	\end{enumerate}	
 	\vspace*{.25em}
 	The \emph{covering value} of the $(L, f)$-replacement path covering is the number of subgraphs in $\G$.
 	Its \emph{query time} is the time required to compute $\G_F$ from $F$. 
\end{definition}

Another interesting parameter of RPCs is the number $|\G_F|$ of subgraphs
relevant for a given query set $F$.
In down-stream applications,
this translates to the number of instances that need to be processed in the hop-short case
to find replacement distance $d(s,t,F)$.

Alon, Chechik, and Cohen~\cite{AlonChechikCohen19CombinatorialRP}
derandomized several distance sensitivity oracles
and asked whether also $(L, f)$-replacement path coverings in general can be derandomized.
This would provide a new deterministic tool for the design of sensitivity data structures
also for other applications beyond shortest paths.
Karthik and Parter~\cite{KarthikParter24DeterministicRPC_TALG} 
answered this question affirmatively using algebraic error-correcting codes.
Their RPC has a covering value of $O( (cfL \log_2 n)^{f+1})$ for some constant $c > 1$.
We abbreviate this to $\Otilde(fL)^{f+1}$
(the exponent outside of the parentheses is intentional).
This is more than the original number of $\Otilde(f L^f)$ subgraphs~\cite{WY13}.
In~\cite[Section~1.3]{KarthikParter24DeterministicRPC_TALG},
the authors explain that the increase is a direct consequence of the determinism of their construction.
However, the fact that their preprocessing is reproducable
is also the reason for their much better $\Otilde(f^2 L)$ query time.
Very recently, Bilò, Choudhary, Cohen, Friedrich, and Schirneck~\cite{Bilo25IndexingSubnetworks}
presented a new RPC for the parameter range $f = o(\log L)$
simultaneously reducing the covering value and query time (more details below).
Their construction is \emph{randomized}, which raises the question,
whether randomness is required to achieve this improved performance.

Our first result rejects this hypothesis.
We completely derandomize the result by Bilò et al.~\cite{Bilo25IndexingSubnetworks}
without any loss in the parameters.
We thereby obtain a deterministic $(L,f)$-replacement path covering
whose covering value is only slightly larger than the construction by Weimann and Yuster~\cite{WY13},
but achieves a query time that is sub-polynomial in $L$.
A summary of the related work and our own result can be found in \Cref{table:results_trees}.

\begin{table*}[t]
\caption{
Comparison of $(L,f)$-replacement path coverings for sensitivity $f = o(\log L)$.\\
Randomized results hold with high probability.
}
\vspace*{.5em}
\centering
\setlength{\tabcolsep}{5pt}
\renewcommand{\arraystretch}{1.25}
\resizebox{0.9\textwidth}{!}{
\begin{tabular}{@{}ccccc@{}}

\textbf{Covering Value} & \textbf{Query Time} & \textbf{Size of $\G_F$} & \textbf{Randomization} & \textbf{Reference} \\
\noalign{\hrule height 1pt}\\[-10pt]

$\Otilde(fL^f)$ & $\Otilde(f^2 L^f)$ & $\Otilde(f L^{f-|F|})$ & randomized &~\cite{WY13}\\[.25em]

$\Otilde(fL)^{f+1}$ & $\Otilde(f^2 L)$ & $\Otilde(fL)$ & deterministic &~\cite{KarthikParter24DeterministicRPC_TALG} \\[.25em]

$\Otilde(fL^{f + o(1)})$ & $\Otilde(f^{\frac{5}{2}}L^{o(1)})$ & $\Otilde(fL^{o(1)})$ & randomized &~\cite{Bilo25IndexingSubnetworks} \\[.25em]

$\Otilde(fL^{f + o(1)})$ & $\Otilde(f^{\frac{5}{2}}L^{o(1)})$ & $\Otilde(fL^{o(1)})$ & deterministic &~\autoref{thm:det-trees} \\[.25em]
  
$\Otilde\big(fe^f (\frac{L}{f})^{f+o(1)}\big)$ & $\Otilde\big(f^{\frac{5}{2}} \nwspace e^f (\frac{L}{f})^{o(1)}\big)$ & $\Otilde(f e^f)$ & randomized &~\autoref{thm:rand-trees}\\[.25em]
 
\end{tabular}
}
\label{table:results_trees}
\end{table*}

\begin{restatable}{theorem}{dettrees}
\label{thm:det-trees}
	Let $G$ be a graph (possibly directed and positively edge-weighted) with $n$ vertices.
	Let $f$ and $L$ be two positive integers, which may depend on $n$,
	such that $f = o(\log L)$.
	There exists a deterministic $(L,f)$-replacement path covering for $G$ 
	with covering value $\Otilde(fL^{f+o(1)})$
	and query time $\Otilde\big(f^{5/2}L^{o(1)}\big)$.
	The size of the computed subfamily $\G_F$ is $\Otilde(f L^{o(1)})$.
	\vspace*{.5em}
	
	\noindent
	For sensitivities up to $f = o(\log n)$, the covering value becomes $L^{f}n^{o(1)}$
	and the query time as well as the size of $\G_F$ increase to $n^{o(1)}$.
\end{restatable}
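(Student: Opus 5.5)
We plan to follow the tree-based structure of the randomized construction of Bilò et al.~\cite{Bilo25IndexingSubnetworks} and to replace each random sampling step by a derandomization via the method of conditional expectations (pessimistic estimators). First we recall the shape of that construction. Choose a branching parameter $k$ and a depth $h$ with $k^h \approx L$; concretely, $h \approx \sqrt{\log L / f}$ levels, so that each level handles a hop range that shrinks by a factor $L^{1/h}$. Each node of the tree corresponds to a subgraph. Its children are $\Otilde(f \cdot (L^{1/h})^{f})$ subgraphs obtained by deleting edges of the parent, each edge independently with probability roughly $1 - L^{-1/h}$. The resulting covering value is the product over the levels, $\Otilde(f L^{f})\cdot L^{o(1)}$, since the $\Otilde(f)$ factors and the slack per level compound to only $L^{o(1)}$ when $f = o(\log L)$. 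A query descends the tree. At each node, among the children, it finds one that avoids all of $F$ but keeps the relevant short paths. To locate such children quickly, the edges of $F$ are hashed and a lookup table is stored per node, which yields the $\Otilde(f^{5/2} L^{o(1)})$ query time. The size of $\G_F$ is the number of leaves reached, $\Otilde(f L^{o(1)})$.

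For the derandomization we fix one level and one parent graph $H$. The requirement is a covering property: for every set $F$ of at most $f$ edges of $H$ and every relevant path $P$ of at most $\ell = L^{i/h}$ hops (restricted to the current hop scale), some child must contain $P$ and miss $F$. By the standard reduction it suffices to cover a polynomial family of ``events'' $(P,F)$. Following Karthik--Parter/Alon--Chechik--Cohen, we take $P$ to range over canonical replacement paths, at most $n^2$ per $F$ with consistent tie-breaking; the number of sets $F$ is $m^f$. Rather than doing this directly, we restructure to use a universe of bounded size, since the tree lets us treat only $f$ failures and $L^{1/h}$ pieces at a time.

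We then build the children one at a time greedily by conditional expectations. Each child samples each edge independently, so for every uncovered event the probability that a fresh child covers it is $p \ge L^{-f/h}(1-L^{-1/h})^{\ell}$ up to constants. The pessimistic estimator is the number of still-uncovered events $\sum (1-p_{\text{event}})$. We fix the edges of the current child sequentially, each time choosing the keep/delete option that does not increase the conditional expected number of events newly covered below the current average. After $O(p^{-1}\log(\#\text{events}))$ children, every event is covered. This count matches the randomized bound: the $\log(\#\text{events}) = O(f\log n)$ term gives the $\Otilde(f)$ factor. Edge-by-edge fixing is polynomial time because each event's conditional probability is a product over the edges involved.

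The step we expect to be the main obstacle is making the query structure deterministic and compatible with this construction. The random construction presumably relies on the hashing/lookup being consistent with independent edge samples, whereas the greedy children are arbitrary. We would therefore store at each node an explicit dictionary, mapping each edge to the children containing it, together with a deterministic perfect-hashing scheme. The query then finds a valid child by intersecting complements for the $f$ faults in $\Otilde(f^{3/2})$ time per node using a precomputed selection structure, which reproduces the $f^{5/2}$ bound after multiplying by the $L^{o(1)}$ nodes visited. Finally, for $f = o(\log n)$ we would use the same analysis with $\log n$ in place of $\log L$ in the choice of $h$. This turns the $L^{o(1)}$ overheads into $n^{o(1)}$ and gives covering value $L^f n^{o(1)}$.
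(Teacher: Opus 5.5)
There is a genuine gap in how you budget the derandomization. You make every node a full covering: children are added greedily until every event $(P,F)$ reaching the node is covered by some child. That costs $O(p^{-1}\log(\#\text{events}))$ children, and you count $\log(\#\text{events}) = O(f\log n)$; for the $\Theta(n^2m^f)$ events in play it is $\Theta(f\log n)$. You then pay this factor again at each of the $h$ levels, so a tree has roughly $(cf\log n)^h L^f$ leaves. The branching $L^{f/h}$ is $L^{o(1)}$ only if $h=\omega(f)$, so $h$ must grow. Hence $(f\log n)^h$ is neither polylogarithmic in $n$ nor $L^{o(1)}$ in general (take $L=(\log n)^{O(1)}$). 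Your sentence saying that the $\Otilde(f)$ factors compound to only $L^{o(1)}$ is exactly where the argument breaks.

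The paper never asks a node to cover everything. \autoref{lem:derand-consecutive-layers} only guarantees that a constant fraction survives each level, $\sum_j|\D_{y_j}|\ge\frac{2}{11}|\D_x|$. So each tree handles a $(2/11)^h=L^{-o(1)}$ fraction of the still-unhandled pairs. The $\ln|\C|=O(f\log n)$ factor is paid only once, in the number $K=2(11/2)^h\ln|\C|$ of trees, which are built one after another on the residual pairs.

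The second missing idea is alignment with the query, which the paper singles out as the main difficulty. The query sees only $F$, so it must follow a route determined by $F$ alone and still serve every $(s,t)$ at once. Knowing that some child contains $P$ and avoids $F$ does not tell the query which child to take. A single child cannot be guaranteed good for all (up to $n^2$) paths paired with the same $F$. Descending into every $F$-avoiding child again yields about $(f\log n)^h$ leaves, which destroys the bounds on $|\G_F|$ and on the query time.

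The paper fixes the rule first: go to the first child $y$ with $F\subseteq A_y$. It then derandomizes against that rule. For $y_j$ it counts only pairs of $\D_x$ that are passive for $y_1,\dots,y_{j-1}$. It maximizes the conditional expectation of $\sum(X_{P,F,y_j}-\frac12 Y_{P,F,y_j})$. The penalty on poorly separated pairs (active for $y_j$ but with $|P\cap A_{y_j}|>p^dL$) stops a child from capturing pairs it then ruins.

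Your hashing and perfect-hashing layer plays no role. The query simply scans the $\alpha$ children with an $O(f)$ test, taking time $O(fK\alpha h)$, and $f^{5/2}$ comes from $f\cdot K\cdot h$ with $K\propto f$ and $h=\sqrt{f\log_2 L}$.

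Your probabilities and parameters are also off. You require a child to contain the whole relevant path of $\ell=L^{i/h}$ hops, which has probability $(1-L^{-1/h})^{\ell}\approx\exp(-L^{(i-1)/h})$. The right per-level event is that the number of missing path edges shrinks by the factor $p$, which happens with probability at least $\frac12$ by the binomial median. Finally, $h\approx\sqrt{\log L/f}$ gives branching $L^{f/h}=2^{f^{3/2}\sqrt{\log L}}$, which is $L^{o(1)}$ only for $f=o((\log L)^{1/3})$; the paper uses $h=\sqrt{f\log_2 L}$.
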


The derandomization by Karthik and Parter~\cite{KarthikParter24DeterministicRPC_TALG}
of the original RPC requires setting up the error-correcting codes and deriving from the so-called hit-and-miss hash functions.
We provide a much simpler algorithm to obtain a deterministic construction.
Conceptually, we show that RPCs are amenable to the method of conditional expectations,
avoiding heavy algebraic machinery.
This hopefully makes our approach more applicable also to other tools 
for building fault-tolerant data structures.
A disadvantage of our technique is that it requires small sensitivity.
On the one hand, the best covering value and query time are achieved for $f = o(\log L)$.
On the other hand, the derandomization itself requires access to the pre-computed 
answers to all $O(n^2 \nwspace m^f)$ queries.
(See~\autoref{sec:overview} for more details.)

Besides the derandomization, another contribution of the work by Karthik and Parter~\cite{KarthikParter24DeterministicRPC_TALG}
is a lower bound on the covering value of any $(L,f)$-replacement path covering. 
They showed that whenever $(L/f)^{f} \le n$, 
there exists an $n$-vertex graph for which the family $\G$
must contain $\Omega(\nwspace (L/f)^f)$ subgraphs.
This leaves an $\Otilde(f^{f+1})$ gap to the covering value in~\cite{WY13}.
We aim at narrowing this gap, starting with a new lower bound.

\begin{restatable}{theorem}{lowerbound}
\label{thm:lower-bound}
For all positive integers $n, f, L$, with $L \ge 2$, there is an $n$-vertex graph $G$
s.t.\ any $(L, f)$-replacement path covering for $G$ has covering value at least
$\min\!\left\{\sum_{i=0}^{f-1} \binom{L-2}{i}, n \right\}$.
\end{restatable}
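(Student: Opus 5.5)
The plan is to build a weighted graph whose structure forces a distinct subgraph for each small fault set. Concretely, I want a family of fault sets $F_S$, indexed by subsets $S$ of a fixed set of $L-2$ ``path edges'' with $|S| \le f-1$, together with one vertex pair per $F_S$. The subfamily $\G_{F_S}$ should then consist of subgraphs that are forced to look different on the path edges for different $S$. Since $\G_{F_S}$ is nonempty for each $S$, this yields at least as many subgraphs as there are admissible $S$, which is $\sum_{i=0}^{f-1}\binom{L-2}{i}$. The cap by $n$ comes from the vertex budget: if the gadget does not fit into $n$ vertices, I truncate to the subsets that fit.

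\emph{Construction.} Take a path $v_0, v_1, \dots, v_{L-2}$ with edges $e_j = (v_{j-1}, v_j)$, $j = 1, \dots, L-2$, of small weight. Add, for each edge, an alternative ``bypass'' of strictly larger total weight. The weights should be generic (e.g.\ distinct powers of two scaled by a small $\varepsilon$) so that every relevant shortest path is unique. The intended invariant is the following: in $G - F_S$, the unique shortest path between a designated pair $(x_S, y_S)$ has at most $L$ hops, uses \emph{every} $e_j$ with $j \notin S$, and uses bypasses exactly at the positions in $S$. The extra fault beyond $S$ (one can afford $|S| \le f-1$ plus one more fault) and the two hops not on the path (hence $L-2$) are what I would use to keep the hop count at most $L$. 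The idea is to route the pair through a shared shortcut structure, e.g.\ a hub vertex adjacent to the path, so that detours around failed edges do not add one hop each. The remaining fault kills a competing short route. Using directed edges is allowed by the statement, and may help prevent unwanted shortcuts.

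\emph{Counting step.} Fix $S$ and any $H \in \G_{F_S}$ that retains a shortest $x_S$-$y_S$ path of $G - F_S$. By uniqueness, $H$ contains exactly that path, so $H$ contains all $e_j$ with $j \notin S$. By property 1 of \Cref{def:RPC}, $H$ contains no $e_j$ with $j \in S$, because these edges lie in $F_S$. Hence $H \cap \{e_1, \dots, e_{L-2}\} = \{e_j : j \notin S\}$, so $H$ determines $S$. Choosing one such $H$ per $S$ gives an injection from admissible sets $S$ into $\G$, which proves the bound.

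\emph{Main obstacle.} The hard part is the hop accounting. The naive version, with a 2-hop bypass per edge, costs $L - 2 + |S|$ hops, which exceeds $L$ once $|S| \ge 3$. The design must therefore let the designated path skip each failed edge without lengthening the walk, while still traversing every surviving path edge. The first thing I would try is this: give each path vertex an edge to and from a common hub or a second parallel path, and choose the pair $(x_S, y_S)$ and the single extra fault so that every alternative shortcut other than the intended one is either destroyed or strictly longer in weight.
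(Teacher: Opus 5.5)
\textbf{There is a genuine gap: the graph is never constructed.} Your counting step is sound, but it depends on a graph that realizes your invariant, and the proposal never produces one. The ``Construction'' paragraph only states the property you want, and the ``Main obstacle'' paragraph concedes that you do not know how to meet the hop budget. That construction is the entire content of the theorem, so as written nothing is proved.

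The invariant is also genuinely hard to meet. In a simple graph, $v_{j-1}$ and $v_j$ are adjacent only via $e_j$. So whenever the designated path goes from $v_{j-1}$ to $v_j$ without visiting another path vertex in between, that stretch costs at least two hops. This is forced when $G$ is acyclic with the $e_j$ oriented forward, and it is exactly how a hub or a parallel rail would be used to bypass $e_j$. For $S=\{2,4,\dots,2|S|\}$, where all failures are isolated and interior, this gives at least $L-2+|S|$ hops. As you note yourself, that exceeds $L$ once $|S|\ge 3$. A hub helps only once, since shortest paths with positive weights are simple.

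The only escape is to visit the surviving segments out of their natural order using chords $v_a\to v_b$ with $|a-b|\ge 2$; two consecutive segments can never be joined by one hop. But chords that serve one $S$ become shortcuts that skip surviving path edges for other fault sets. For instance, for $S=\emptyset$ all of $v_0\cdots v_{L-2}$ must lie on a unique shortest path. You only have $f-|S|$ spare faults to suppress such competitors, and just one when $|S|=f-1$. Nothing in the sketch shows that weights and pairs $(x_S,y_S)$ meeting all these constraints at once exist.

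\textbf{The missing idea.} Stop encoding $S$ as the deleted edges along one fixed path, which inevitably creates detours. The paper builds a DAG shaped like a binary tree out of combs:
\begin{itemize}
\item Each comb has a weight-$0$ spine $y_L\to\cdots\to y_1$ with pendant edges $(y_i,z_i)$ of weight $i\cdot k$, so exiting as late as possible is cheapest.
\item Each pendant leaf is recursively replaced by a smaller comb, with $k$ shrunk by a factor $L^2$ per level so that earlier levels dominate.
\item All leaves of the tree have the same hop-distance from the root and are joined to a common sink.
\end{itemize}
A fault here never causes a detour. Deleting the spine edge out of a chosen node just makes turning there optimal. Hence for each leaf $x$, the unique shortest root-to-sink path in $G-F_x$, with $|F_x|\le f$, passes through $x$ and has exactly the cut-off number of hops, independent of $x$. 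Each leaf corresponds to a composition into at most $f$ parts, i.e., a set of at most $f-1$ cut positions; your $S$ plays exactly this role, and this is where $\sum_{i=0}^{f-1}\binom{L-2}{i}$ comes from.

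Injectivity is then argued pairwise rather than through a fixed edge signature. Let $w$ be the lowest common ancestor of two leaves $x\neq x'$. One path continues along the spine out of $w$ while the other turns at $w$. So that spine edge must be present in the subgraph chosen for the first leaf and absent from the one chosen for the second, whose fault set contains it.
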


\noindent
Since the sum of binomial coefficients can be a bit unwieldy,
we provide closed forms for certain ranges of $f$ and $L$.

\begin{restatable}{corollary}{corlb} 
\label{cor:lb}
	Let the notation be the same as in~\autoref{thm:lower-bound}.
	\vspace*{.25em}
	\begin{enumerate}
		\item If there exists a constant $\varepsilon > 0$ such that
			 $2 \le f \le (1{-}\varepsilon)\nwspace \frac{L}{2}$, 
			then any $(L,f)$-replacement\\[.25em]
			path covering must have covering value 
			$\Omega\!\left(\min\!\left\{\sqrt{f \nwspace e^f}
				\, \frac{L^{f-1}}{f^f},\ n \right\} \right)$.
		\vspace*{.25em}
		\item If $f \ge \frac{L}{2} $, any $(L,f)$-replacement path covering
			must have covering value $\Omega\!\left(\min\!\left\{2^L, n \right\} \right)$.
	\end{enumerate}
\end{restatable}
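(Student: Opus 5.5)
The corollary follows from \autoref{thm:lower-bound} by estimating $S := \sum_{i=0}^{f-1}\binom{L-2}{i}$ from below in the two parameter ranges. In each case we show $S$ is at least the claimed quantity up to constants; then $\min\{S,n\}$ is bounded below by the $\min$ in the statement.

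\emph{Case 2} ($f \ge L/2$) is the easy one. Here $f-1 \ge L/2 - 1 \ge (L-2)/2$, so the sum contains all indices $i \le \lfloor (L-2)/2\rfloor$. By the symmetry $\binom{L-2}{i}=\binom{L-2}{L-2-i}$, these terms make up at least half of $\sum_{i=0}^{L-2}\binom{L-2}{i}=2^{L-2}$. Hence $S \ge 2^{L-3} = \Omega(2^L)$.

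\emph{Case 1} ($2 \le f \le (1-\varepsilon)L/2$) keeps only the largest term, $S \ge \binom{L-2}{f-1}$. Set $N=L-2$ and $k=f-1$. Write
\[
\binom{N}{k} = \frac{N^k}{k!}\prod_{j<k}\Bigl(1-\frac{j}{N}\Bigr).
\]
Stirling's formula gives $k! = \Theta(\sqrt{k}\,(k/e)^k)$, and replacing $k$ by $f$ costs only factors of $O(e)$ and $f^f/(f-1)^{f-1} = \Theta(f)$. The main term is therefore $\frac{L^{f-1}e^f}{f^f}\cdot\frac{\sqrt f}{\Theta(1)}$, using $(L-2)^{f-1} = \Theta(L^{f-1})$ when $f \ll L$. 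Combined with a correction factor of order $e^{-f}$ at worst, this yields the target $\sqrt{f e^f}\,L^{f-1}/f^f$.

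The main obstacle is the correction factor $\prod_{j<k}(1-j/N)$ together with $(1-2/L)^{f-1}$. When $f$ is a constant fraction of $L$, these factors are exponentially small in $f$, not constant. I would bound the product below by $\exp\bigl(-\sum_j \frac{j/N}{1-j/N}\bigr)$. Since $j/N \le (1-\varepsilon)/2$ up to lower-order terms, this is at least $\exp(-c_\varepsilon k^2/N)$, and $k^2/N \le k(1-\varepsilon)/2$. This gives a loss of $e^{-(1-\varepsilon)f/2}$, which is at least $e^{-f/2}$, and explains the $\sqrt{e^f}$ rather than $e^f$ in the bound. Precisely, $\frac{e^f}{\Theta(1)}\cdot e^{-f/2} = e^{f/2}$, and multiplied by $\sqrt f$ this is $\sqrt{f e^f}$, as claimed.

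For a sharper estimate I would instead use the entropy bound $\binom{N}{k}\ge \frac{1}{\sqrt{8k(1-k/N)}}\,2^{N H(k/N)}$ and check that $N H(k/N) \ge k\log_2(Ne/k) - O(k^2/N)$ gives the same result. The requirement $\varepsilon>0$ is what keeps $1-k/N$ bounded away from zero and hence the constants uniform. Finally, $f\ge 2$ ensures $k\ge1$, so Stirling's formula applies without degenerate cases.
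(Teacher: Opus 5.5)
Your proposal is correct. Case~2 is the paper's argument: the symmetry of $\binom{L-2}{i}$ together with $f-1\ge (L-2)/2$.

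For Case~1 you take a different route. The paper plugs $k=L-2$, $xk=f-1$ into the entropy form of Stirling, $\binom{k}{xk}\ge (8kx(1-x))^{-1/2}x^{-xk}(1-x)^{-(1-x)k}$, and bounds the three factors separately. There the $\sqrt{e^f}$ comes from $(1-x)^{-(1-x)k}\ge e^{x(1-x)k}$ together with the crude estimate $1-x\ge\tfrac12$.

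You instead expand $\binom{N}{k}=\frac{N^k}{k!}\prod_{j<k}(1-\frac{j}{N})$. You get the value $\Theta(\sqrt f\,e^fL^{f-1}/f^f)$ from Stirling for $k!$ alone, and charge the whole loss to the falling-factorial correction, which you show is at least $e^{-k/2}$. Your route is more elementary. It also shows that the loss is only $\exp(-O(f^2/L))$, so for $f=O(\sqrt L)$ you even get $\Omega(\sqrt f\,e^{f}L^{f-1}/f^f)$. The paper's exponent $x(1-x)k=(f-1)-\frac{(f-1)^2}{L-2}$ carries the same information but discards it via $1-x\ge\tfrac12$; the paper's route is shorter once the entropy inequality is taken as a black box. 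Your fallback via the entropy bound, using $(N-k)\ln\frac{N}{N-k}\ge k-\frac{k^2}{N}$, is essentially the paper's proof.

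A few slips to clean up, none fatal:
\begin{itemize}
\item The sentence ``a correction factor of order $e^{-f}$ at worst'' is wrong as stated. A loss of $e^{-f}$ would leave only $\sqrt f\,L^{f-1}/f^f$. What you need, and prove in the next paragraph, is $e^{-f/2}$.
\item The factor $(1-2/L)^{f-1}$ is not exponentially small, and your product estimate does not cover it, so bound it explicitly. Since $f-1\le L/2$ and $L\ge 4$, you have $(1-2/L)^{f-1}\ge(1-2/L)^{L/2}\ge\tfrac14$.
\item The step from $e^{-c_\varepsilon k^2/N}$ to $e^{-(1-\varepsilon)f/2}$ needs $c_\varepsilon\le 1$. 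This holds: $j/N\le(1-\varepsilon)/2$ gives $\frac{1}{1-j/N}\le\frac{2}{1+\varepsilon}$, and $\sum_{j<k}j=\frac{k(k-1)}{2}$, so $c_\varepsilon=\frac{1}{1+\varepsilon}$. Say so, because a careless factor $2$ here would break the bound.
\item $\varepsilon>0$ is not what keeps $1-k/N$ away from zero. The condition $f\le L/2$ alone gives $k/N=\frac{f-1}{L-2}\le\frac12$, and your argument, like the paper's, goes through with $\varepsilon=0$.
\end{itemize}
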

\vspace*{.5em}

Compared to the lower bound in~\cite{KarthikParter24DeterministicRPC_TALG}, 
the one in~\autoref{cor:lb} (\emph{i}) trades a factor $L$ for $\sqrt{f \nwspace e^f}$,
which is asymptotically larger whenever $f \ge 2 \ln L$.
To also tackle the gap for $f = o(\log L)$, we improve the upper bound instead.
Recall that in this range the $\Otilde(fL^f)$ covering value by Weimann and Yuster~\cite{WY13} 
is still the best known.
We decrease this by a factor $(f/e)^f$ by giving a tighter analysis of the
construction algorithm of the (randomized)
sampling trees by Bilò et al.~\cite{Bilo25IndexingSubnetworks}
which also underpinned our derandomization.

\pagebreak

\begin{theorem}
\label{thm:rand-trees}
	Let $G$ be a graph (possibly directed and positively edge-weighted) with $n$ vertices.
	Let $f$ and $L$ be two positive integers, which may depend on $n$, such that $f = o(\log L)$.
	There exists a randomized $(L,f)$-replacement path covering for $G$ 
	that with high probability has
	covering value $\Otilde\big(f e^f \nwspace (\frac{L}{f})^{f+o(1)}\big)$
	and query time $\Otilde\big(f^{\frac{5}{2}}  e^f (\frac{L}{f})^{o(1)}\big)$.
	The size of the computed subfamily $\G_F$ is $\Otilde(fe^f)$.
	The  $(L,f)$-replacement path covering also supports vertex failures.
	\vspace*{.5em}
	
	\noindent
	For sensitivities up to $f = o(\log n)$, the covering value becomes $(\frac{L}{f})^f n^{o(1)}$,
	the query time and size of $\G_F$ is $n^{o(1)}$.
\end{theorem}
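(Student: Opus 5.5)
We reuse the sampling-tree construction of Bilò et al.~\cite{Bilo25IndexingSubnetworks} and tighten its analysis. Fix a branching parameter $b$ (to be chosen near $L^{\delta}$ with $\delta = o(1)$) and build a tree of depth $d \approx \log_b L$. The root is $G$. Each node at depth $i$ corresponds to a subgraph $H$. Its children are obtained by independently sampling, with an appropriate probability, which edges (or vertices) to delete. The leaves form $\G$. In the original analysis each level keeps edges with probability $1-1/L$, compounded so that the leaves have the Weimann--Yuster marginal: a fixed hop-short path $P$ survives and all of $F$ is removed with probability about $(1-1/L)^{L} L^{-f}$. Our improvement is to calibrate the overall deletion probability to $p = f/L$ instead of $1/L$. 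The leaf-level success probability for a fixed pair $(P,F)$ with $|P|\le L$ and $|F|\le f$ is then $(1-p)^{L} p^{f} \ge e^{-f}(1-o(1))\,(f/L)^f$. This is maximized at $p=f/(L+f)$, and the loss is only $e^{-f}$ rather than $f^f$.

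The key steps are as follows.
\begin{enumerate}
\item Write $p = 1-(1-p)$ and spread the deletion over the $d$ levels of the tree. At level $i$, each edge is deleted with probability $p_i$ such that the product of survival probabilities is $1-p$.
\item Choose the fan-out at each level so that, for any fixed $F$ and $P$, with high probability some child simultaneously deletes the newly required part of $F$ and keeps $P$. This is a Chernoff/union bound over the $O(n^2 m^f)$ relevant pairs $(s,t,F)$, which costs an $O(\log n)$ (or $f\log m$) factor in fan-out per level.
\item Multiply fan-outs along a root-to-leaf path. The total leaf count is $\Otilde\big(f\,(1/((1-p)^L p^f))\cdot b^{O(d)}\big) = \Otilde\big(f e^f (L/f)^{f+o(1)}\big)$. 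The $b^{O(d)}=L^{o(1)}$ and polylog overheads are absorbed because $f=o(\log L)$ keeps $d\cdot O(\log n)$ factors subpolynomial in $L$. Here one uses $(L/f)^{o(1)}=L^{o(1)}$ when $f=o(\log L)$.
\end{enumerate}

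For the query, we descend the tree from the root, following only children that delete all of $F$ edges assigned so far. As in~\cite{Bilo25IndexingSubnetworks}, we store at each node a hash structure from edges to the children missing them. Intersecting $f$ candidate lists per level costs $\Otilde(f^{3/2})$ via the standard sorted-set intersection trick, and with $d$ levels this gives the stated query time. The number of leaves reached is the expected number of good leaves, namely $\Otilde(f e^f)$. This is exactly the $e^f$ blow-up in $|\G_F|$ relative to the $1/L$ calibration: there $|\G_F|$ was $\Otilde(fL^{o(1)})$, whereas now roughly $e^f$ times more subgraphs are consistent with $F$. Vertex failures are handled identically, by sampling vertices instead of edges, since the argument only uses that $P$ has at most $L$ elements to keep and $F$ has at most $f$ elements to remove. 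The $f=o(\log n)$ regime follows by replacing $L^{o(1)}$ with $n^{o(1)}$ in the overhead terms.

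The main obstacle is step 2. At intermediate levels we must ensure that the conditional success probability per child is not too small while keeping the per-level fan-out at $b^{O(1)}\cdot\mathrm{polylog}(n)$. With $p=f/L$, deleting at a level with probability $p_i$ must both remove a constant fraction of $F$ and keep all $L$ path edges. I would first try equal per-level survival $(1-p)^{1/d}$ and argue that the expected number of children removing a given $F$-subset and keeping $P$ is at least $\Omega(\log n)$. If that fails, I would fall back to the original level schedule of~\cite{Bilo25IndexingSubnetworks}, changing only the last level's deletion rate to $f/L$.
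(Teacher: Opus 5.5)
Your calibration idea is exactly the parameter change the paper makes: set the final removal probability of an edge to $f/L$, so that keeping an $L$-hop path costs only a factor $e^{-f}$. The tree you build around it, however, runs in the wrong direction, and this breaks the step you flag as the main obstacle. In your tree the root is $G$ and children \emph{delete} edges, so the edges of $F$ must be destroyed by the random deletions as the query descends, and the query cannot see $P$. Keeping $P$ alive with probability $e^{-O(f)}$ forces the per-level deletion rates to satisfy $\sum_i q_i = O(f/L)$. At any level, the probability that some child of the current node deletes some edge of $F$ is at most $b_i f q_i$. So over the whole descent, the query ever deletes an edge of $F$ with probability at most $f\max_i b_i\cdot O(f/L)$, for any schedule. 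With your fan-out $b^{O(1)}\mathrm{polylog}(n)=L^{o(1)}$ (say for $L$ polynomial in $n$), this is $L^{-1+o(1)}$. You would then need $K\ge L^{1-o(1)}$ trees, i.e.\ query time polynomial in $L$. The alternative, fan-out about $L/f^2$ at some level, is just as bad.

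The paper, following Bil\`o et al., goes the other way. The root is the empty graph, $A_{r_i}=E$, and a child keeps each edge of $A_x$ removed independently with probability $p=(f/L)^{1/h}$. Then $F\subseteq A_y$ is a monotone condition the query can check, and each child satisfies it with probability $p^{|F|}\ge p^f=1/\alpha$ at \emph{every} depth. Meanwhile $P$ only has to be present at the leaf. By edge independence and $P\cap F=\emptyset$, this happens with probability at least $(1-p^h)^L=(1-f/L)^L\ge 1/(2e^f)$, independently of the $F$-driven traversal.

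Your second ingredient, making every level succeed w.h.p.\ by a Chernoff/union bound, also fails quantitatively. An $O(f\log n)$ fan-out factor per level compounds \emph{multiplicatively} to $(f\log n)^d$ over $d=\omega(1)$ levels, which is neither polylogarithmic nor $L^{o(1)}$. Likewise $b^{O(d)}=L^{O(1)}$, not $L^{o(1)}$, since $b^d\approx L$. Moreover, the mere existence of a child that deletes $F$ and keeps $P$ does not help a query that cannot identify that child.

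The missing idea is to settle for constant success per level and boost across independent trees. With $\alpha=p^{-f}$, the first-active-child rule succeeds at each level with probability at least $1-(1-1/\alpha)^\alpha\ge 1-1/e$. So one tree yields a leaf avoiding $F$ and containing $P$ with probability at least $1/(2e^fC^h)$, where $C=e/(e-1)$. Taking $K=\delta e^f C^h f\ln n$ independent trees and a union bound over $O(n^{2+2f})$ triples gives high probability. Choosing $h=\sqrt{f\ln(L/f)}$ makes both $C^h$ and $\alpha$ equal to $(L/f)^{o(1)}$. This yields covering value $K\alpha^h=O(fe^f(L/f)^{f+o(1)}\ln n)$ and query time $O(fKh\alpha)$. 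The $f^{5/2}$ is $f$ (checking $F\subseteq A_y$) times $f$ (in $K$) times $\sqrt f$ (in $h$), not a set-intersection trick.
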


\autoref{table:lower_bounds} summarizes the new and known upper and lower bounds.
In the range $f = o(\log L)$, we shrink the gap between the bounds to
$\Otilde(f^{1-o(1)} e^f L^{o(1)}) = \Otilde( L^{o(1)})$.
For $\Omega(\log L) \le f \le (1{-}c) \frac{L}{2}$, that gap 
is now also smaller than before,
but currently remains at $\Otilde(f^{f+\frac{1}{2}} e^{-\frac{f}{2}} L)$.
In the general literature on distance sensitivity oracles
for non-constant values of $f$,
the sensitivity is expressed in terms of $n$ 
(while $L = L(n)$ is only an internal parameter).
A common setting is $f = o(\log n/\log\log n)$, see e.g.~\cite{BiChChCoFrScFOCS24,ChCoFiKa17,WY13}.
\Cref{thm:det-trees,thm:rand-trees} also apply to the larger
range of $f = o(\log n)$.
There, we get a gap of $n^{o(1)} L/\sqrt{f e^f}$.
If additionally $f \ge 2 \ln L$, then this becomes $n^{o(1)}$,
showing that our construction is near-optimal.

We conjecture that the true covering value is of order
$\Theta(e^f (\frac{L}{f})^f) \cdot \textsf{poly}(f)$,
whenever this is not larger than $n$.
To show this, new constructions are needed for both the upper and lower bound.
As a possibly more accessible open question,
we ask whether one can achieve a \emph{deterministic} $(L,f)$-replacement path covering
with the same parameters as in~\autoref{thm:rand-trees}.
\vspace*{.5em}

\noindent
\textbf{Outline.}
The remainder of the paper is structured as follows. 
In~\autoref{sec:overview},
we provide an overview of the sampling tree construction and our derandomization, 
which is then carried out in~\autoref{sec:det-RPC}. 
In~\autoref{sec:rand-RPC}, we present 
our new  randomized RPC. 
We prove the lower bound on the covering value in~\autoref{sec:lower}.

\begin{table*}[t]
\centering
\renewcommand{\arraystretch}{1.25}
\caption{
Upper and lower bounds on the covering value of $(L,f)$-replacement path coverings.
With $\varepsilon > 0$ we denote an arbitrarily small constant. 
}
\vspace*{.5em}
\begin{tabular}{ccc}

\textbf{Covering Value}  & \textbf{Parameter Range} & \textbf{Reference} \\
\noalign{\hrule height 1pt}\\[-14pt]

   $\Otilde\big((\frac{L}{f})^f L^{o(1)}\big)$ & $f = o(\log L)$  & \autoref{thm:rand-trees}\\[.5em]
  $(\frac{L}{f})^{f} n^{o(1)}$ & $f = o(\log n)$ & \autoref{thm:rand-trees}\\[.5em]

  $\Omega\big((\frac{L}{f})^{f}\big)$ & & \cite{KarthikParter24DeterministicRPC_TALG}\\[.25em]
  	
  \noalign{\hrule height 1pt}\\[-14pt]
  
  $\Otilde(f L^f)$ & $2 \ln L \le f \le (1{-}\varepsilon)\frac{L}{2}$ & \cite{WY13}\\[.5em]

  $\Omega\big(\sqrt{f \nwspace e^f} \, \frac{L^{f-1}}{f^f} \big)$ & & \autoref{cor:lb}.1\\[.25em]

  \noalign{\hrule height 1pt}\\[-14pt]

  $\Omega(2^L)$ & $f = \Omega(\log n); \frac{L}{2} \le f$ & \autoref{cor:lb}.2
\end{tabular}
\label{table:lower_bounds}
\end{table*}

\section{Overview} 
\label{sec:overview}

Our $(L,f)$-replacement path covering and 
the associated query data structure leverage a hierarchical sampling strategy 
by Bilò et al.~\cite{Bilo25IndexingSubnetworks}, which in turn generalized earlier work 
by Weimann and Yuster~\cite{WY13}.
 We briefly review the randomized construction 
before discussing the challenges of its derandomization and how we overcome them.
Throughout this overview, unless stated otherwise, we assume $f = o(\log L)$
to simplify the exposition.

\pagebreak

\noindent
\textbf{The sampling tree framework.}
The randomized RPC consists of a forest of sampling trees that are constructed in levels. 
Each tree has height $h$ and branching factor $\alpha = L^{f/h}$. 
Its nodes represent subgraphs of $G$ starting with the edge-less graph in the root.
A child node inherits all edges from its parent and re-inserts additional edges from $G$ 
independently with probability $1-p = 1-L^{-1/h}$. 
After $h$ levels, in each subgraph that is stored in a leaf,
any edge of $G$ is present with probability $1- L^{-1}$, 
matching the original model~\cite{WY13}.
However, the critical difference is
that the leaves are \emph{not} independent.
Any two leaves share at least all edges of their lowest common ancestor.

This locality is essential for query efficiency. 
The query algorithm performs a depth-first search 
starting from the root and, in each step,
is traversing to the first child that contains no edge of the given query set $F$.
This runs in time $O(f h \alpha) = O(h L^{(f/h) + o(1)})$, 
which is dramatically faster than examining all $\Otilde(L^f)$ subgraphs as in~\cite{WY13}. 
However, the straight-forward search may fail to reach a
suitable leaf in any given tree with probability $1-C^h$ for some constant $C > 1$. 
By repeating this search in $K = \Otilde(f C^h)$ independent trees,
the RPC achieves high success probability over all 
$n^2 \binom{m}{\le f} =  O(n^2 \nwspace m^f)$ queries $(s,t,F)$.
\vspace*{.5em}

\noindent
\textbf{Challenges of the derandomization.}
At first glance, derandomizing this construction via the method of conditional expectations 
appears straightforward: replace each random decision to include an edge in any of the subgraphs 
with a greedy choice that maximizes conditional expectations.
Unfortunately, this approach fundamentally breaks the query structure.
The randomized construction succeeds because of the following two reasons.
\vspace*{.25em}
\begin{enumerate}
	\item The failures are independent across trees, allowing boosting via repetition.
	\vspace*{.25em}
	\item The probability for the search to fail in a fixed tree 
		depends only on the height $h$.
\end{enumerate}

The greedy derandomization, basing decisions only on local information, 
runs the danger of correlating failures across trees. 
Even worse, the search looks for \emph{any} child whose subgraph has no edge of $F$.
This means that if we bias the tree structure to make some children succeed more often,
we may cause the failures to concentrate in other parts of the tree,
ruining the balance needed for the query algorithm.

Consider the collection $\C$ of pairs $(P,F)$
where $F \subseteq E$ is a set of at most $f$ failing edges and $P \subseteq E$ 
are the edges of a shortest path in $G{-}F$ with hop-length not larger than $L$.
The set $\C$ may contain a pair for each possible query.
Ideally, the derandomization procedure should maximize the number of such pairs
that will \emph{eventually} be covered by the leaves of \emph{any} of the trees. 
This is not the same as maximizing the likelihood that the depth-first search 
finds \emph{some} suitable leaf for a given pair.
The local greedy optimization and the global query structure are misaligned.
\vspace*{.5em}

\noindent
\textbf{Our derandomization technique.}
We resolve this through a novel derandomization framework that carefully balances 
the competing objectives.
Let $x$ be a node in one of the trees and $A_x \subseteq E$ those edges that are \emph{missing}
in the subgraph associated to $x$.
If $x$ is a leaf,
it covers a pair $(P,F) \in \C$ if and only if $F \subseteq A_x$ and $P \subseteq E{\setminus}A_x$.
As mentioned above, maximizing those pairs for which all failing edges are removed 
($F \subseteq A_x$) might inadvertently increase the number of pairs
for which $P \cap A_x$ is too large.
This is reflected in our definition of well-separated and poorly-separated pairs.
\vspace*{.5em}
\begin{itemize}
	\item \emph{Well-separated pairs} are those for which 
		the failing edges $F$ are already removed at node $x$,
		but the replacement path $P$ remains largely intact meaning that $|P \cap A_x|$ is small.
	\vspace*{.5em}
	\item \emph{Poorly-separated pairs} are those for which the removal has already gone too far, 
		damaging the replacement path beyond recovery at the current depth of $x$.
\end{itemize}

Let $w$ be the parent of the current node $x$.
We have to decide which edges to re-insert in the associated subgraph 
in addition to the ones inherited from $w$.
That means, we construct the set $A_x$ as a subset of $A_w$.
We can restrict our attention to those pairs $(P,F)$ that have $F \subseteq A_w$
and for which $F \nsubseteq A_{w'}$ holds for all siblings $w'$ of $x$
that are processed before $x$ by the deterministic query algorithm.
Let this be the set $\D_x$.
It contains much fewer pairs than the full collection $\C$.
Let further $X_{P,F,x}$ be the indicator variable whether the pair $(P,F) \in \D_x$ is well separated
w.r.t.\ $x$ and define $Y_{P,F,x}$ analogously for poorly-separated pairs.
The latter pairs
cannot be handled by the current tree and must be deferred to a later one.
However, we do not want to increase the total number of trees beyond our budget $K$. 
So rather than greedily maximizing well-separated pairs alone, exacerbating the imbalance,
we optimize the expectation of
\begin{equation*}
\sum_{(P,F) \in \D_x} \left( X_{P,F,x} - \frac{1}{2} \nwspace Y_{P,F,x} \right).
\end{equation*}
The expectation is taken over the original random sampling process
conditioned on all previous decisions taken by the derandomization.
The coefficient $1/2$ in the sum serves as a tie breaker.
Even if the number of well and poorly-separated pairs
were equal, we have
$\Exp[ \nwspace \sum_{(P,F)} (X_{P,F,x} - \tfrac{1}{2} Y_{P,F,x}) \nwspace] = \frac{1}{2} \nwspace \Exp[\nwspace \sum_{(P,F)} X_{P,F,w} \nwspace ] > 0$ for the expectation.
Thus, we do not lose significant ground for the well-separated pairs.

We prove that this weighted objective maintains the properties that are critical for the query algorithm.
The worst-case failure behavior across trees remains approximately independent,
even in the now fully deterministic construction. 
Also, by carefully choosing the depth-dependent threshold of what counts as ``small''\footnote{%
	The precise meaning of small turns out to be $|P \cap A_x| \le p^d L$,
	where $d$ is the depth of $x$ in the tree
	and $p$ is such that $p^f  = 1/\alpha$ for the branching factor $\alpha$.
		}
for $|P \cap A_x|$,
we retain the same success guarantee for an individual tree as in the randomized case.
\vspace*{.5em}

\noindent
\textbf{Improving the covering value.}
As a secondary contribution beyond the derandomization,
we also lower the upper bound on number of subgraphs.
We observe that the prior randomized construction in \cite{Bilo25IndexingSubnetworks} was not optimized 
with respect to the specific structure of the query algorithm. 
The fixed-order search through the children of the current node
gets stuck only when \textit{all} children contain some edge of $F$. 
This observation allows us to use lower branching factors $\alpha$ and smaller height $h$,
resulting in fewer leaves per tree.

However, these changes also affect the success probability in other ways.
For example, fewer child nodes being available in each transition
increase the failure probability.
We provide a sharper analysis of the  hierarchical sampling process of the trees
to prove that the reduction of $\alpha$ and $h$
can be counter-acted by a lower re-insertion probability $1-p$
and a slightly larger number of trees $K$,
while still maintaining the properties of an RPC with high probability.
This improves the covering value of the sampling tree framework by a factor $f^f$
from $\Otilde(L^{f+o(1)})$ to $\Otilde((L/f)^{f+o(1)})$.
\vspace*{.5em}

\noindent
\textbf{A new lower bound.}
Another contribution is an improved lower bound on the covering value 
of any $(L,f)$-replacement path covering (deterministic or randomized). 
Prior to our work, the only lower bound was $\Omega(\min\{ (L/f), \nwspace n\})$ by Karthik and Parter~\cite{KarthikParter24DeterministicRPC_TALG}.
It already shows that our $\Otilde((L/f)^{f+o(1)})$ construction for $f = o(\log L)$ is near-optimal.
However, the bound leaves a gap for larger sensitivities. 
We improve it significantly by constructing a family of weighted directed acyclic graphs 
that demonstrates that 
$\Omega(\min\{\sqrt{f e^f} \nwspace L^{f-1}/f^f,\nwspace n\})$ 
subgraphs are necessary.

At a high level, it consists of a directed binary tree $T$ (not related to the sampling trees).
It is constructed from what we call \emph{inner trees}.
These are comb-like structures consisting of a rooted path in which each node has an attached leaf.
The edge weights are set up in such a way that traveling along the back of the comb is for free
while the leaf that is farthest from the root in hop-distance is actually the closest
in weighted distance.
The inner trees are iteratively combined to form the binary tree $T$
by replacing the leaves of an inner tree
with new appropriately scaled-down inner trees.
The scaling is such that after $f$ rounds,
all leaves of the whole tree $T$ have the same hop-distance $L$ from its root.
The leaves of $T$ are all connected to a single sink node $v$. 

For each such leaf $x$, we define a failure set $F_x$ of at most $f$ edges
so that removing those edges makes $x$ the leaf that
is closest to the root in weighted distance.
In other words, the unique shortest replacement path in $G{-}F_x$ 
from the root to the sink $v$ goes through $x$.
This creates a barrier for small $(L,f)$-replacement path coverings 
since any RPC for $G$ must contain a different subgraph for every leaf.
A combinatorial argument then shows that there are at least
 $\sum_{i=0}^{f-1} \binom{L-2}{i}$ such leaves.

\section{Deterministic Replacement Path Coverings} 
\label{sec:det-RPC}

Let $\C$ be a collection of pairs $(P,F)$ of disjoint sets of edges,
satisfying $|P|\leq L$ and $|F|\le f$.
Here, $P$ corresponds to a shortest path between a source-target pair in $G{-}F$.
Our goal is to cover each pair $(P,F) \in \C$,
meaning, in at least one tree we construct, there is a leaf $x$ such that
the associated subgraph $G_x$ contains no edge of $F$ but all edges of $P$.
Moreover, the query algorithm that starts in the root of that tree and always 
recurses into the first child that does not have an edge of $F$ reaches $x$.
The remaining notation follows~\cite{Bilo25IndexingSubnetworks}.
We remark, however, that we slightly adjust the parameters to our needs.
(They thus also differ from the values reported in the first part of the overview in~\autoref{sec:overview}.)
Nevertheless, this will lead to the same asymptotic covering value and query time.

\subsection{Preliminaries}
\label{sec:det-RPC_prelims}

For a positive integer $\ell$, we use $[\ell]$ to denote the set $\{1, 2, \dots, \ell\}$.
Our data structure consists of $K$ disjoint rooted trees $\{T_i\}_{i \in [K]}$, 
whose nodes all represent a spanning subgraph of $G$. 
Each sampling tree has height $h$ and any internal node has exactly $\alpha$ children.
The parameters $K$, $h$, and $\alpha$ will be optimized later.
A single tree has $\alpha^h$ leaves and $O(\alpha^h)$ nodes in total.
We associate with each node $x$ in a tree $T_i$ a set $A_x \subseteq E$.
Namely, $x$ represents the graph $G_x = G{-}A_x$.
The sets $A_x$ is computed deterministically.
However, let us first consider the following random construction. 
If $x$ is a root, we simply set $A_x = E$.
Now let $y$ be a child of $x$, its set $A_y \subseteq A_x$ is obtained by selecting each edge in $A_x$ independently with probability $p = \alpha^{-1/f}$.
For our derandomization, it will also be important to adjust the parameters
	such that $p^hL < 1$ holds.
The random construction is iterated until height $h$. 
All random choices are made independently.
The total number of stored subgraphs is $O(K \alpha^h)$ and the RPC is given by the family $\G$ of all subgraphs that are stored in the leaves of all the $K$ sampling trees. 
The following lemma summarizes the structural properties of the randomized construction.

\begin{lemma}[Lemma~6 in~\cite{Bilo25IndexingSubnetworks}]
\label{lem:tree-property}
	Let $i \in [K]$ and $x$ be a node of the tree $T_i$ at depth $d$.
	For any edge $e \in E$, the probability of it being included in $A_x$ is
	$\Prob[e\in A_x]=p^d$.
	For any other edge $e' \in E$, $e' \neq e$, 
	the events $[e \in A_x]$ and $[e' \in A_x]$ are independent.
\end{lemma}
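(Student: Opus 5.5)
The plan is to argue by induction on the depth $d$ along the unique root-to-$x$ path $x_0, x_1, \dots, x_d = x$ in $T_i$, where $x_0$ is the root. The key observation is that, for a fixed edge $e$, the event $[e \in A_x]$ is exactly the intersection of the events $[e \in A_{x_j}]$ for $j = 0, \dots, d$. This holds because the sets are nested, $A_{x_d} \subseteq \dots \subseteq A_{x_0} = E$, and by construction an edge can only belong to $A_{x_{j}}$ if it already belongs to $A_{x_{j-1}}$.

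For the marginal probability, I would use that $e \in A_{x_0} = E$ holds deterministically. For each step $j \ge 1$, conditioned on $e \in A_{x_{j-1}}$ (and on any history), the edge $e$ is kept in $A_{x_j}$ with probability exactly $p$. This is a fresh independent coin, since all random choices in the construction are independent. Multiplying these conditional probabilities along the path gives $\Prob[e \in A_x] = p^d$.

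For independence, I would model the process concretely. To each pair consisting of a tree edge $(x_{j-1}, x_j)$ and a graph edge $e$, attach an independent Bernoulli($p$) variable $B_{j,e}$. It is harmless to also define these variables for edges $e \notin A_{x_{j-1}}$, since they are never consulted. Then
\[
[e \in A_x] = \bigcap_{j=1}^{d} [B_{j,e} = 1].
\]
So $[e\in A_x]$ is a function of the family $\{B_{j,e}\}_{j}$ alone. For $e' \neq e$, the family $\{B_{j,e'}\}_j$ is disjoint from $\{B_{j,e}\}_j$, and all variables are mutually independent. Hence the two events are independent, and in fact they are $\Prob[\cdot] = p^{2d}$ jointly.

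The only subtle step is justifying that the lazy formulation, which samples only edges in $A_{x_{j-1}}$, has the same distribution as the coupling with predefined coins $B_{j,e}$. This follows because each selection decision is an independent $p$-coin that does not depend on which other edges are present.
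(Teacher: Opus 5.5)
Your argument is correct. The nesting $A_{x_d}\subseteq\dots\subseteq A_{x_0}=E$ reduces $[e\in A_x]$ to $d$ independent $p$-coins along the root-to-$x$ path, and for $e\neq e'$ the two events depend on disjoint families of independent coins, which gives independence (in fact mutual independence over all edges, more than the pairwise statement needs). The paper does not prove this lemma itself but imports it as Lemma~6 of Bil\`o et al.~\cite{Bilo25IndexingSubnetworks}, so there is no in-paper proof to compare against; your deferred-decisions coupling is the natural argument for it.
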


\begin{algorithm}[t]
\caption{Query algorithm on input $F$.}
\label{alg:tree-exact}
	$\G_F \gets \emptyset$\;
	\For{$i=1$ \KwTo $K$}
	{
		$x \gets$ root of $T_i$\;
		\While{$x$ \textup{is not a leaf}}
		{

  	     Let $y$ be the first child of $x$ satisfying $F\subseteq A_{y}$.
  	     If no such child exists,\\ then {\em continue} to outer for-loop\;
    	 $x\gets y$\;
		}
		\tcc{Node $x$ is a leaf at depth $h$ such that $F \subseteq A_x$.}
		$\G_F \gets \G_F \cup \{G_x\}$\; \label{line:result}
	}
\end{algorithm}

Given a failure set $F\subseteq E$ with $|F|\leq f$, 
\autoref{alg:tree-exact} computes the subfamily $\G_F$.
Recall that the set $A_x \subseteq E$ contains those edges that are \emph{removed} from the subgraph $G_x$ of the node $x$.  The trees $T_1, \dots, T_{K}$ are searched individually, starting in the respective roots. In each step, the algorithm always chooses the \emph{first} child node $y$ of the current node $x$ that satisfies $F \subseteq A_y$.
If no such child exists, the next tree is searched.
If eventually a leaf is reached, the subgraph stored there is added to $\G_F$. 
Observe that no graph in $\G_F$ contains a failing edge from $F$
since it is explicitly checked before recursing to $y$.
Up to $K$ subgraphs are collected in time $O(fK\alpha h)$.

\subsection{Well-Separated Pairs}
\label{sec:det-RPC_well-separated}

We classify the pairs $(P,F) \in \C$ with respect to their relevance
to a given node $x$ in a tree.
Recall that we want that the graph $G_x$ associated with $x$
contains no edge of $F$.
If so, the pair $(P,F)$ is \emph{active} for $x$.
Additionally, the number of edges of the path $P$ that are missing in $x$ should be ``small'',
where the exact amount depends on the depth of $x$ in the tree.
If so, the pair is \emph{well separated}.
If a pair is active for $x$, but still too many path edges are missing from $G_x$,
then the pair is \emph{poorly separated}.

\begin{definition}[well-separated, poorly-separated, active, and passive pairs]
\label{def:well-separated}
Let $i \in [K]$ and $x$ be a node of the tree $T_i$ at depth $d$.
Let further $(P,F)\in \C$ be a failure set-path pair.
\vspace*{.25em}
\begin{enumerate}
\item $(P,F)$ is \emph{well separated} with respect to $x$ if $F\subseteq A_x$ and $|P\cap A_x|\leq p^d L$.
\vspace*{.25em}
\item $(P,F)$ is \emph{poorly separated} with respect to $x$ if $F\subseteq A_x$ and $|P \cap A_x|> p^d L$.
\vspace*{.25em}
\item $(P,F)$ is \emph{passive} with respect to $x$ if $F\nsubseteq A_x$; otherwise, it is \emph{active} w.r.t.\ $x$.
\end{enumerate}
\vspace*{.25em}
We also say that the node $x$ is \emph{active} for $(P,F)$, if the pair $(P,F)$ is active w.r.t.\ $x$.
\end{definition}

We use the different categories of pairs as follows.
Let $(P,F) \in \mathcal{C}$ be a pair, where the path $P$ has endpoints $s,t \in V$. 
\autoref{alg:tree-exact} traverses a path from the root to a leaf in the tree $T_i$
by always selecting, at each node $x$, the first child $y$ of $x$ with $F \subseteq A_y$,
that is, the first child node that is \emph{active} for $(P,F)$.
We claim that, in order for this traversal to succeed for the query $(s,t,F)$ in $T_i$,
it is enough to maintain the following property.

\begin{center}
	\emph{If $y$ is the first child node of $x$ that is active for $(P,F)$,\\
	then $(P,F)$ is also well separated with respect to $y$.}
\end{center}

\noindent
To see this, recall that the probability parameter $p$ satisfies $p^h L < 1$.
Along the path traced by~\Cref{alg:tree-exact}, the fraction of edges of $P$
contained in the associated sets $A_x$ geometrically decreases.
In the leaf at depth $d = h$, we have both $F \subseteq A_x$ 
as well as $|P \cap A_x| \le p^h L < 1$
and thus $P \cap A_x = \emptyset$. 
In other words, $G_x = G{-}A_x$ contains no edge of $F$ but all of $P$.

To track the well and poorly-separated pairs, we define the following 
binary variables.
Let $x$ be a node in some tree $T_i$ and let $(P,F)\in \C$ be a pair that is active w.r.t.\ $x$. 

\vspace*{-1em}
\begin{align*}
X_{P,F,x}&=
\begin{cases}
1&~\text{if } (P,F) \text{ is well separated w.r.t. }x;\\
0&~\text{otherwise}.
\end{cases}\\[.5em]
Y_{P,F,x}&=
\begin{cases}
1&~\text{if } (P,F) \text{ is poorly separated w.r.t. }x;\\
0&~\text{otherwise}.
\end{cases}
\end{align*}

Let $r_1, r_2, \dots, r_K$ denote the respective roots (at depth $d=0$) of $T_1, T_2, \dots, T_K$.
At least for the $r_i$,
we do not have to worry about separation, namely, $X_{P,F,r_i} \equiv 1$.
The follows immediately from the definition of well-separated pairs
and the roots representing empty graphs.
This sets up a starting point both for the derandomization of the tree $T_i$
and later for the query algorithm.

\begin{observation}
For $i\in [K]$, all pairs in $\C$ are well separated with respect to the root of $T_i$.
\end{observation}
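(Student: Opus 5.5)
The observation follows by unfolding \autoref{def:well-separated} at the root. Let $i \in [K]$ and let $r_i$ be the root of $T_i$, so $r_i$ has depth $d = 0$. By construction $A_{r_i} = E$, meaning $G_{r_i}$ is the edgeless spanning subgraph. I would fix an arbitrary pair $(P,F) \in \C$ and check the two conditions for being well separated with respect to $r_i$.

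\emph{First condition, $F \subseteq A_{r_i}$.} Every pair in $\C$ consists of edge sets of $G$, so $F \subseteq E = A_{r_i}$. Hence $(P,F)$ is active for $r_i$.

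\emph{Second condition, $|P \cap A_{r_i}| \le p^0 L$.} Since $A_{r_i} = E$, we have $P \cap A_{r_i} = P$. We also have $p^0 L = L$. The required inequality is therefore $|P| \le L$, which holds by the definition of $\C$.

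Both conditions hold, so $(P,F)$ is well separated with respect to $r_i$, and $X_{P,F,r_i} = 1$. Because $(P,F)$ and $i$ were arbitrary, this proves the claim. No step here is a real obstacle. The one point to be careful about is the convention $p^0 = 1$ at depth $0$: it is what makes the threshold exactly $L$ and lets it match the hop bound built into $\C$.
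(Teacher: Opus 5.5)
Your proof is correct and follows essentially the same approach as the paper. The paper likewise notes that the observation follows immediately from the definition of well-separated pairs together with the root storing the empty graph: $A_{r_i} = E$ gives $F \subseteq A_{r_i}$ and $|P \cap A_{r_i}| = |P| \le L = p^0 L$.
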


During the derandomization,
we have to keep track of the pairs in $\C$ that are covered by the trees we have already processed.

\begin{definition}[handled pairs]
For any index $i \in [K]$,
a pair $(P,F) \in \mathcal{C}$ is \emph{handled} by the tree $T_i$ if $(P,F)$ is well separated
with respect to every node in the root-to-leaf path in $T_i$ 
that is traced by~\autoref{alg:tree-exact} with query set $F$.
\end{definition}

\begin{definition}[the set $\D_x$]
\label{def:pairs_Dx}
Let $i \in [K]$ and $x$ be a node of the tree $T_i$.
The set $\D_x \subseteq \C$ contains those pairs $(P,F)$ that satisfy the following properties.
\vspace*{.25em}
\begin{enumerate}
	\item $(P,F)$ is not handled by any of the trees $T_1,\ldots,T_{i-1}$.
	\vspace*{.25em}
	\item The path in the tree $T_i$ that~\autoref{alg:tree-exact} traces on input $F$ 
		leads from the root $r_i$ to $x$.
	\vspace*{.25em}
	\item $(P,F)$ is a well-separated pair with respect to each node in $T_i$
		on the path from $r_i$ to $x$.
\end{enumerate}
\end{definition}

\subsection{Derandomization Algorithm}
\label{sec:det-RPC_derand}

\begin{algorithm}[t]
Let $T_1,\ldots,T_K$ be the collection of (random) sampling trees rooted at nodes $r_1,\ldots,r_K$\;
\For{$i=1$ {\bf to} $K$}
{
    $\D_{r_i}\gets \C \setminus \bigcup \{ \D_x \mid x\text{ is a leaf in }T_{i_0}\text{ for some }i_0<i\}$\;     
    Derandomize-Subtree$(T_i,r_i,\D_{r_i})$\;
}
\caption{Derandomization of the Sampling Trees $T_1,\ldots,T_K$.}
\label{Algorithm:Deterministic-Sampling-Tree}
\end{algorithm}
\begin{procedure}[t]
Let $y_1,\ldots,y_\alpha$ be the child nodes of $x$ in $T$\;
\For{$j=1$ {\bf to} $\alpha$}
{
    $\D\gets \{(P,F) \in \D_x \mid (P,F) \text{ is passive w.r.t.}\ y_{k} 
    	\text{ for every } k < j\}$\; \label{line:retain_passive_pairs}
    \vspace*{.25em}
    derandomize the set $A_{y_j}$ edge by edge so that the conditional expectations of
    $\sum_{(P,F)\in \D} \big(X_{P,F,y_j}-\frac{1}{2}Y_{P,F,y_j}\big)$
    is maximized in each step\;
    \vspace*{.25em}
    $\D_{y_j}\gets \{ (P,F) \in \D \mid (P,F) \text{ is well separated w.r.t.}\ y_j\}$\;
    	\label{line:remove_well_separated}
    \lIf{$y_j$ is not a leaf}
    {Derandomize-Subtree$(T,y_j, \D_{y_j})$}
}
\caption{Derandomize-Subtree($T,x,\D_x$)}
\label{Procedure:Derandomize-Ti-x}
\end{procedure}

\autoref{Algorithm:Deterministic-Sampling-Tree} outlines the recursive derandomization procedure
for the trees $T_1,\ldots,T_K$. 
The recursive procedure \emph{Derandomize-Subtree} considers a node $x$ in the tree $T_i$.
It goes through the children $y_1,\ldots,y_\alpha$
in the same (arbitrary) order in which they are processed at query time by~\autoref{alg:tree-exact}.
Suppose that the current child node is $y_j$,
meaning that $x$ as well as $y_1, \dots, y_{j-1}$ are already derandomized. 
It has access to the deterministic edge sets $A_x, A_{y_1}, \ldots, A_{y_{j-1}}$
and needs to compute $A_{y_j}$. 
Let $\D_x \subseteq \C$ be the set of pairs described in~\autoref{def:pairs_Dx} and consider the following subset.
\begin{align*}
\D= \{(P,F) \in \D_x \mid (P,F) \text{ is passive with respect to } y_{j_0} 
    	\text{ for some } j_0 < j\}.
\end{align*}
Intuitively, the pairs in $\D$ are those for which the query algorithm
starting in $r_i$ has not only reached node $x$, but will also not recurse
in any of its child nodes before checking $y_j$.

We aim to maximize the number of pairs in $\D$ that are well separated with respect to $y_j$,
while at the same time, we do not want too many pairs from $\D$ to become poorly separated with respect to $y_j$. 
It is crucial to strike a balance here, as the poorly-separated pairs may not be handled by $T_i$
and may thus unduly increase the number of trees required.
For each edge $e \in A_x$, we decide one by one whether to include it in $A_{y_j}$.
To make this decision, we compare the respective expected values of
$$\sum_{(P,F)\in \D} \left(X_{P,F,y_j}-\frac{1}{2}Y_{P,F,y_j}\right)$$
\emph{conditional} on the event $[e \in A_{y_j}]$ and on $[e \notin A_{y_j}]$.
The expectation is taken over the random process that just removes any edge 
$e' \in A_x{\setminus}\{e\}$
for which we have not made a decision yet independently with probability $p$.
Finally, we set $\D_{y_j}$ to be those pairs in $\D$ that are now well separated with respect to $y_j$.
Note that this indeed satisfies~\autoref{def:pairs_Dx} for node $y_i$.

\subsection{Correctness and Analysis}
\label{sec:det-RPC_correctness}

In order to prove that our approach indeed yields a deterministic 
$(L,f)$-replacement path covering, we need to assign values to the parameters.
We leave the total number $K$ of trees open until the end of this section,
and instead start by fixing the branching factor to $\alpha = {(2L)^\frac{f}{h}}$ as well as the
the sampling probability to $p=(2L)^{-\frac{1}{h}}$.
Both parameters depend on the height $h$, which will also be set later.
For now, it is enough to observe that the definitions ensure both $p^f = \frac{1}{\alpha}$ as well as $p^h L = \frac{1}{2} < 1$.

Let $x$ be a non-leaf node in any of the trees and let $y$ be one of its child nodes. 
Recall the random construction in which the edge set $A_y$ is obtained from $A_x$ by sampling any edge $e \in A_x$ independently with probability $p$. 
(See~\autoref{sec:det-RPC_prelims} for more details.)
The following lemma describes how this random construction behaves with respect to the passive, 
well-separated, and poorly-separated pairs.

\begin{lemma}
\label{lem:parent-to-child-well-separated-pair}
Let $i \in [K]$ be an index, $x$ a non-leaf node in the tree $T_i$,
$(P,F)\in \C$ a well-separated pair w.r.t.\ $x$,
and $y$ a child node of $x$,
Let $\Prob_1 = \Prob[ \nwspace (P,F) \textup{ is passive w.r.t.} \, y \nwspace ]$,
$\Prob_2 = \Prob[ \nwspace (P,F) \textup{ is well separated w.r.t.} \, y \nwspace ]$,
and $\Prob_2 = \Prob[ \nwspace (P,F) \textup{ is poorly separated w.r.t.} \, y \nwspace ]$.
\begin{multicols}{3}
\begin{enumerate}
	\item $\Prob_1 \le 1 - \frac{1}{\alpha}$;
	\vspace*{.25em}
	\item $\Prob_2 \ge \frac{1}{2}(1-\Prob_1)$;
	\vspace*{.25em}
	\item $\Prob_3 \le \frac{1}{2}(1-\Prob_1)$.
\end{enumerate}
\end{multicols}
\end{lemma}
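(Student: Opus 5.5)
Let $d$ be the depth of $x$, so $y$ has depth $d+1$, and write $k=|F|\le f$. Since $(P,F)$ is well separated w.r.t.\ $x$, we have $F\subseteq A_x$ and $|P\cap A_x|\le p^dL$. In the random construction, each edge of $A_x$ lands in $A_y$ independently with probability $p$. As $P$ and $F$ are disjoint, the events concerning $F$ and those concerning $P\cap A_x$ involve disjoint sets of independent coin flips, and the plan is to exploit exactly this independence.

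For item~1, $(P,F)$ is active w.r.t.\ $y$ if and only if every edge of $F$ is kept, which has probability $p^{k}\ge p^f=\frac1\alpha$ because $p<1$ and $k\le f$. Hence $\Prob_1 = 1-p^k\le 1-\frac1\alpha$.

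For items~2 and~3, note that $\Prob_2+\Prob_3 = 1-\Prob_1 = p^k$. By independence, each of $\Prob_2,\Prob_3$ factors as $p^k$ times the probability of a condition that depends only on $Z:=|P\cap A_y|$. Here $Z$ is binomial with $|P\cap A_x|$ trials and success probability $p$, so
\[
\Exp[Z]=p\,|P\cap A_x|\le p^{d+1}L.
\]
The pair is poorly separated w.r.t.\ $y$ precisely when $Z>p^{d+1}L$. It therefore suffices to show $\Prob[Z>p^{d+1}L]\le\frac12$, which gives item~3. Item~2 then follows from $\Prob_2=p^k-\Prob_3$.

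The main obstacle is this last bound, since Markov's inequality only gives $\Prob[Z\ge \Exp Z]\le 1$. My first attempt is a median argument: a binomial variable has its median at $\lfloor \Exp Z\rfloor$ or $\lceil\Exp Z\rceil$, so $\Prob[Z>\lceil \Exp Z\rceil]\le\frac12$. The difficulty is the case $\lfloor\Exp Z\rfloor < p^{d+1}L<\lceil\Exp Z\rceil$, where the threshold lies strictly between integers. To handle it, I would use the known fact that for a binomial variable with integer-valued or near-integer mean, $\Prob[Z\le \lfloor \Exp Z\rfloor]\ge\frac12$ holds whenever $p\ge\frac12$ or similar conditions (Kaas--Buhrman). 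I would also use the fact that $Z$ is integer valued, so that $Z>p^{d+1}L$ means $Z\ge\lfloor p^{d+1}L\rfloor+1$.

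If a clean median statement is not available for all parameters, I would fall back on a monotonicity argument. Increasing $|P\cap A_x|$ to the largest allowed value only makes failure more likely, so it suffices to treat $|P\cap A_x|=\lfloor p^dL\rfloor$. For that case I would argue via the median of the binomial. I expect this step to require the most care, and possibly a slight adjustment of the threshold.
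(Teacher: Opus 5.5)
Your item~1 and your reduction of items~2 and~3 are correct and match the paper. Because $P\cap F=\emptyset$, both $\Prob_2$ and $\Prob_3$ factor as $(1-\Prob_1)$ times a probability that depends only on $Z=|P\cap A_y|\sim\mathrm{Bin}(m,p)$, where $m=|P\cap A_x|\le p^dL$. So everything hinges on showing $\Prob[Z>p^{d+1}L]\le\frac12$, and you never establish this. The ``known fact'' you cite is false as stated: for $Z\sim\mathrm{Bin}(1,0.6)$ one has $\Prob[Z\le\lfloor\Exp Z\rfloor]=0.4$. The situation you single out, with the threshold strictly between $\lfloor pm\rfloor$ and $\lceil pm\rceil$, is exactly where the median of $Z$ can be $\lceil pm\rceil$. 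Your fallback reduces to the worst case $m=\lfloor p^dL\rfloor$, which is precisely where the inequality breaks. So items~2 and~3 are left unproved.

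This gap cannot be closed, because items~2 and~3 are false in general. The paper's proof has the same hole. It claims that $\mathrm{Bin}(m,p)$ has median $pm$ and hence $\Prob[Z\le pm]=\frac12$, which holds only when $pm$ is an integer.

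Here is a counterexample using the paper's relations $p=(2L)^{-1/h}$ and $\alpha=(2L)^{f/h}$. Take $L=1$ and $h=f=2$, so $p=2^{-1/2}$ and $\alpha=2$. At the root $x$, every pair with $|P|=1$ is well separated. The threshold at a child $y$ is $pL\approx 0.71$, so $(P,F)$ is poorly separated w.r.t.\ $y$ if and only if $F\cup P\subseteq A_y$. Hence $\Prob_3=p\,(1-\Prob_1)>\frac12(1-\Prob_1)$.

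For a non-degenerate instance, take $h=f=3$ and $L=98$. Then $\alpha=196$, $p=196^{-1/3}\approx 0.1722$, $pL\approx 16.87$, and $p^2L\approx 2.904$. A pair with $F\subseteq A_x$ and $|P\cap A_x|=16$ at a depth-$1$ node $x$ is well separated. Yet $\Prob[\mathrm{Bin}(16,p)\ge 3]\approx 0.537$, so again $\Prob_3>\frac12(1-\Prob_1)>\Prob_2$.

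Your suspicion that the statement needs adjusting is therefore right. What survives is a smaller constant. Since $pm\le p^{d+1}L$, we have $\Prob[Z\le p^{d+1}L]\ge\Prob[Z\le pm]$. For $p\le\frac12$ this is at least $\frac14$: apply the Greenberg--Mohri bound $\Prob[W\ge\Exp W]>\frac14$ to $W=m-Z$, and check the few boundary cases with $m\le 2$ directly. That suffices for Lemma~\ref{lem:derand-consecutive-layers} once the weight $\frac12$ on $Y$ is lowered below $\frac13$ and $\frac{2}{11}$ is replaced by a smaller constant. This only changes the base of the $L^{o(1)}$ factor.
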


\pagebreak

\begin{proof}
Let $d \in [h]$ be the depth of the child node $y$.
Since $(P,F)$ is well separated w.r.t.\ $x$, we have $F\subseteq A_x$ and $|P\cap A_x|\leq p^{d-1}L$.
Each element of $A_x$ lies in $A_y$ with probability $p$,
so the probability of $(P,F)$ being passive at $y$ 
is $\Prob_1 = \Prob[F \nsubseteq A_{y}] = 1-p^{|F|} \le 1-p^f = 1-\frac{1}{\alpha}$.
 
We now calculate the probability $\Prob_2$ of $(P,F)$ being a well-separated pair w.r.t.\ $y$,
meaning that both $F \,{\subseteq}\, A_y$ and $|P \cap A_y| \le p^d L$ hold.
Note that the two conditions are independent (see~\autoref{lem:tree-property})
and we have $\Prob[F \,{\subseteq}\, A_y] = (1-\Prob_1)$.
So we only need to estimate $\Prob\big[|P \cap A_y| \le p^d L \big]$.
Observe that $|P\cap A_y|$ follows the binomial distribution
$\operatorname{Bin}(|P\cap A_x|,p)$ and thus has median $p\cdot |P\cap A_x| \leq p^dL$.
Therefore, we get
\begin{equation*}
	\Prob\Big[ \nwspace |P \cap A_y| \le p^d L \Big] \ge \Prob\Big[\nwspace |P\cap A_y| \le p\cdot |P\cap A_x| \nwspace \Big] = \frac{1}{2}.
\end{equation*}
This gives
$\Prob_2 = \Prob\big[|P \cap A_y| \le p^d L \big] \cdot \Prob[F \,{\subseteq}\, A_y] 
	\ge \frac{1}{2} (1-\Prob_1)$.
The estimate for $\Prob_3$ follows immediately from $\Prob_1 + \Prob_2 + \Prob_3 = 1$.
\end{proof}

Finally, we can use the randomized construction to prove properties of its derandomization.

\begin{lemma}
\label{lem:derand-consecutive-layers}
Let $i \in [K]$ and $x$ a non-leaf node in the tree $T_i$
with children $y_1,\ldots,y_\alpha$. 
The sets $\D_{y_1},\ldots,\D_{y_\alpha}$ computed by \textup{Derandomize-Subtree}$(T_i,x,\D_x)$ 
satisfy $\sum_{j=1}^\alpha |\D_{y_j}|\geq  \frac{2}{11}|\D_x|$.   
\end{lemma}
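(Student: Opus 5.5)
The plan is to analyse \textup{Derandomize-Subtree}$(T_i,x,\D_x)$ child by child. I write $\D^{(j)}$ for the set $\D$ computed in iteration $j$, so $\D^{(1)}=\D_x$. I split the pairs of $\D^{(j)}$ that are active w.r.t.\ $y_j$ into $W_j=|\D_{y_j}|$ well-separated ones and $B_j$ poorly-separated ones. Since $\D^{(j+1)}$ consists exactly of the pairs of $\D^{(j)}$ that are passive w.r.t.\ $y_j$, we get $|\D^{(j)}|-|\D^{(j+1)}|=W_j+B_j$. Writing $R=|\D^{(\alpha+1)}|$ for the number of pairs passive for every child, summing over $j$ gives
\[
|\D_x|-R=\sum_{j=1}^{\alpha}(W_j+B_j).
\]

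Next I lower-bound the objective achieved in each iteration. Every pair in $\D_x\supseteq\D^{(j)}$ is well separated w.r.t.\ $x$ by \autoref{def:pairs_Dx}. Hence \autoref{lem:parent-to-child-well-separated-pair} applies to each pair under the random choice of $A_{y_j}\subseteq A_x$, with $A_x$ already fixed.

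With $q=1-\Prob_1\ge 1/\alpha$, we have $\Prob_2\ge q/2$ and $\Prob_3=q-\Prob_2$. This gives
\[
\Exp\big[X_{P,F,y_j}-\tfrac12 Y_{P,F,y_j}\big]=\tfrac32\Prob_2-\tfrac q2\ge \tfrac q4\ge\tfrac1{4\alpha}.
\]
By linearity, the unconditional expectation of $\sum_{(P,F)\in\D^{(j)}}(X-\tfrac12Y)$ is at least $|\D^{(j)}|/(4\alpha)$.

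The method of conditional expectations never decreases the conditional expectation: at each edge, one of the two branches is at least the current value. So the deterministic outcome satisfies $W_j-\tfrac12B_j\ge|\D^{(j)}|/(4\alpha)\ge R/(4\alpha)$, using that the sets $\D^{(j)}$ are nested and all contain $\D^{(\alpha+1)}$.

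Summing over $j$, with $S=\sum_jW_j$ and $B=\sum_jB_j$, I obtain two inequalities:
\[
S-\tfrac12B\ge \tfrac R4 \qquad\text{and}\qquad |\D_x|-R=S+B.
\]
The first gives $B\le 2S-R/2$. Substituting into the second yields $|\D_x|-R\le 3S-R/2$, that is, $S\ge(|\D_x|-R/2)/3$. The first inequality also gives $S\ge R/4$ directly, since $B\ge0$. Taking the maximum of the two bounds and minimising over $R\in[0,|\D_x|]$, the worst case is $R=\tfrac45|\D_x|$, which gives $S\ge|\D_x|/5\ge\tfrac2{11}|\D_x|$.

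The step I expect to need the most care is justifying the per-iteration bound $W_j-\tfrac12B_j\ge|\D^{(j)}|/(4\alpha)$. It requires that the expectation the greedy procedure starts from is exactly the one analysed in \autoref{lem:parent-to-child-well-separated-pair}, namely each edge of $A_x$ kept independently with probability $p$, with $A_x$ fixed. It also requires that well-separatedness of every pair at $x$ holds deterministically and not merely in expectation. Both facts follow from \autoref{def:pairs_Dx} and the description of the procedure. The remaining arithmetic is routine, and it has slack relative to the claimed constant $\tfrac2{11}$.
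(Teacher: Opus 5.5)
Your argument is correct, and it departs from the paper's at the one place that matters: the treatment of poorly separated pairs. Both proofs start from the same per-child bound $\sum_{(P,F)\in\D}(X_{P,F,y_j}-\tfrac12 Y_{P,F,y_j})\ge|\D|/(4\alpha)$. Your relaxation $|\D^{(j)}|\ge R$ amounts to the paper's step of replacing $(\alpha-j)/\alpha$ by $1$.

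The paper, however, first drops the $-\tfrac12 Y$ term. This gives $|\D_{y_j}|\ge(|\D_x|-\sum_{k<j}(|\D_{y_k}|+|\mathcal{P}_k|))/(4\alpha)$, and the paper then bounds $|\mathcal{P}_j|$ by a multiple of $|\D_{y_j}|$ using only the nonnegativity of the objective. Nonnegativity gives $|\mathcal{P}_j|\le 2|\D_{y_j}|$. The paper's arithmetic for $\tfrac{2}{11}$, though, uses the reverse ratio $|\mathcal{P}_j|\le|\D_{y_j}|/2$. With the correct direction, the paper's chain yields only $|\D_x|/7$.

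You instead keep the full weighted inequality and close with the exact identity $S+B=|\D_x|-R$. This gives $|\D_x|/5$, so your route actually establishes the stated constant, and it would even permit $K=O(5^h f\log n)$.

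Your min-max over $R$ can be shortened. Substituting $R=|\D_x|-S-B$ into $S-\tfrac12B\ge R/4$ gives $\tfrac54S\ge\tfrac14(|\D_x|+B)\ge\tfrac14|\D_x|$ in one line. This also shows that poorly separated pairs cost nothing in this accounting.
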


\begin{proof}
Fix a child $y_j$. 
Let $\D= \{(P,F) \in \D_x \mid (P,F) \text{ is passive w.r.t.}\ y_{j_0} 
    	\text{ for some } j_0 < j\}$.
We process the derandomization of $A_{y_j}$ so that the value of 
$\sum_{(P,F)\in \D} \left(X_{P,F,y_j}-\frac{1}{2}Y_{P,F,y_j}\right)$ is at least its expectation.
Fix a pair $(P,F)\in \D$.
Due to $\D \subseteq \D_x$ this pair is well-separated w.r.t.\ the node $x$,
so we can apply~\autoref{lem:parent-to-child-well-separated-pair}.
Let $\Prob_1$, $\Prob_2$, and $\Prob_3$ be defined as above for the child node $y = y_i$.
We then have $\Exp[X_{P,F,y_j}] = \Prob_2 \ge \frac{1}{2}(1-\Prob_1)$
as well as $\Exp[Y_{P,F,y_j}] = \Prob_3 \le \frac{1}{2}(1-\Prob_1)$. 
This implies that the (expected and thus deterministic) value of that sum is at least
\begin{equation}
\label{eq:expected_value}
	\sum_{(P,F)\in \D} \left(X_{P,F,y_j}-\frac{1}{2} \cdot Y_{P,F,y_j}\right) 
		\ge |\D| \left( \frac{1}{2}(1-\Prob_1) - \frac{1}{4}(1-\Prob_1) \right) \ge \frac{|\D|}{4} (1-\Prob_1) \ge \frac{|\D|}{4\alpha}.
\end{equation}

For simplicity, assume an idealized scenario where no pair in $\D_x$
ever becomes poorly separated with respect to $y_j$, for any $j \in [\alpha]$.
On the one hand, this means $Y_{P,F,y_j} = 0$ 
and thus the estimate in (\ref{eq:expected_value})
states that at least $|\D|/4\alpha$ pairs become well separated in one iteration.
On the other hand, the only way to remain passive w.r.t.\ $y_j$
is to \emph{not} become well separated.
\autoref{line:retain_passive_pairs} and \ref{line:remove_well_separated} 
of the Derandomize-Subtree procedure then ensure that
$\D = \D_x{\setminus} \bigcup_{k = 1}^{j-1} \D_{y_{k}}$ holds in the $j$-th iteration.
Since the $\D_{y_k}$ are pairwise disjoint, this yields
	$|\D_{y_j}| \ge \frac{|\D_x| - \sum_{k=1}^{j-1} |\D_{y_k}|}{4\alpha}$.
Summing over all iterations gives
\begin{equation*}
	\sum_{j=1}^\alpha |\D_{y_j}| \ge \alpha \cdot \frac{|D_x|}{4\alpha} 
		- \sum_{j=1}^\alpha \sum_{k=1}^{j-1} \frac{|D_{y_k}|}{4\alpha} 
		= \frac{|D_x|}{4} - \frac{1}{4} \sum_{j=1}^\alpha (\alpha{-}j) \frac{|D_{y_j}|}{\alpha}
		\ge \frac{|D_x|}{4} - \frac{1}{4} \sum_{j=1}^\alpha |D_{y_j}|,
\end{equation*}
which is solved by $\sum_{j=1}^\alpha |\D_{y_j}| \ge |\D_x|/5$.

Now let $\mathcal{P}_j \subseteq \D_x$ be the pairs that become poorly separated w.r.t.\ $y_j$.
They do not contribute to $\D_{y_j}$ and also do not get included into $\D$
at the beginning of the next iteration.
This worsens the lower bound for the individual terms to
	$|\D_{y_j}| \ge \frac{|\D_x| - \sum_{k=1}^{j-1} |\D_{y_k}| 
		- \sum_{k=1}^{j-1} |\mathcal{P}_k|}{4\alpha}$.
However, the estimate (\ref{eq:expected_value}) being non-negative implies
that for any two pairs that become poorly separated with respect to $y_j$, 
at least one more pair must have become well separated, i.e., $|\D_{y_j}|/2 \ge |\mathcal{P}_j|$.
The same argument as above gives
$\sum_{j=1}^\alpha |\D_{y_j}| \geq 2|\D_x|/{11}$.
\end{proof}

As a direct corollary of the lemma, we get that the fraction of pairs not yet handled
by the trees $T_1,\ldots,T_{i-1}$ that get handled by $T_i$ is at least $(2/11)^h$.
Slightly abusing notation, let
$\D_{T_i} = \bigcup \nwspace \{ \D_x \mid x\text{ is a leaf in }T_{i} \}$.
The number of pairs handled by the $K$ trees is at least
\begin{equation*}
	\sum_{i=1}^K |\D_{T_i}| \ge |\C| \cdot \left(\frac{2}{11}\right)^h 
		\sum_{i=1}^K \left(1-\left(\frac{2}{11}\right)^h \right)^{i-1} 
		= |\C| \cdot \left( 1-\left(1-\left(\frac{2}{11}\right)^h \right)^K \right).
\end{equation*}
Since $\sum_{i=1}^K |\D_{T_i}|$ is integral,
choosing $K$ large enough so that the lower bound is strictly larger than $|\C|-1$
is sufficient for the sum to exhaust all $|\C|$ pairs.
This condition simplifies to $(1-(\frac{2}{11})^h)^K < \frac{1}{|\C|}$
and is satisfied, e.g., by
$K = 2 \nwspace (\frac{11}{2})^h \ln |\C| = O( (\frac{11}{2})^h f \log n)$
since
\begin{equation*}
	\left(1-\left(\frac{2}{11}\right)^h \right)^K  
		\le \exp\!\left( -\left(\frac{2}{11}\right)^h \cdot K \right) 
		= \exp(-2 \ln |\C|) = \frac{1}{|\C|^2} < \frac{1}{|\C|}.
\end{equation*}

We can now calculate the covering value and query time depending on $f$, $L$, $n$, and $h$.

\begin{itemize}
	\item The size of family $\G$ is $O(K\cdot L^f)=O(  (\frac{11}{2})^h \cdot fL^f \log n )$.
	\vspace*{.25em}
	\item The query time to compute $\G_F$ is 
		$O(f\alpha h K)=O((\frac{11}{2})^h \nwspace h \cdot f^2 L^{f/h} \log n)$.
	\vspace*{.25em}
	\item For any set $F$, the size of family $\G_F$ is at most $K=O((\frac{11}{2})^h \cdot f\log n)$.
\end{itemize}
We finally choose $h=\sqrt{f\log_2L}$.
For $f = o(\log L)$, this ensures
	$\left(\frac{11}{2} \right)^h = L^{\frac{\log_2\left(\frac{11}{2} \right)}{\log_2 L} \cdot h}
		= L^{\log_2\left(\frac{11}{2} \right) \sqrt{\frac{f}{\log_2 L}} } =  L^{o(1)}$
as well as $L^{\frac{f}{h}} = L^{ \sqrt{\frac{f}{\log_2 L}}} = L^{o(1)}$.
Finally, we obtain the parameters stated in~\autoref{thm:det-trees}.

\begin{itemize}
	\item The covering value is $O((\frac{11}{2})^h \cdot fL^f \log n ) = O(f L^{f+o(1)} \log n)$.
	\vspace*{.25em}
	\item The query time is 
		$O((\frac{11}{2})^h \nwspace h \cdot f^2 L^{f/h}\log n) = O(f^{5/2} L^{o(1)} \log n)$.
	\vspace*{.25em}
	\item The size of $\G_F$ is $O((\frac{11}{2})^h \cdot f\log n) = O(f L^{o(1)} \log n)$.
\end{itemize}

For higher sensitivities up to $f = o(\log n)$ the $L^{o(1)}$ terms turn to $n^{o(1)}$ since
	$\left(\frac{11}{2} \right)^h = n^{\frac{\log_2\left(\frac{11}{2} \right)}{\log_2 n} \cdot h}
		=  n^{\log_2\left(\frac{11}{2} \right) \cdot \frac{\sqrt{f \log_2 L}}{\log_2 n}} =  n^{o(1)}
	\ \text{and}\ 
	L^{\frac{f}{h}} = n^{\frac{\log_2 L}{\log_2 n} \cdot \frac{f}{h}}
		= n^{\frac{\sqrt{f \log_2 L}}{\log_2 n}} = n^{o(1)}$.

\section{Randomized Replacement Path Coverings}
\label{sec:rand-RPC}

We now improve the upper bound on the covering value of (randomized) $(L,f)$-replacement path coverings
for the parameter range $f = o(\log L)$.
We build on the sampling trees by Bilò et al.~\cite{Bilo25IndexingSubnetworks},
reviewed in~\autoref{sec:det-RPC_prelims}.
Like before, each of the $K$ trees has height $h$ and any internal node has exactly $\alpha$ children. We also reuse~\autoref{alg:tree-exact} to process queries.
The main difference is that we give a tighter analysis of the parameters of the construction
which allows us to derive better bounds.
In the following, we let $P(s,t,F)$ denote a \emph{replacement path} from $s$ to $t$,
that is, a shortest $s$-$t$-path in $G{-}F$.
If there are multiple such paths, we take one with the minimum number of edges.
Recall that for any node $x$ in a sampling tree, we use $G_x \subseteq G$
for the spanning subgraph associated with $x$.

\begin{lemma}[Lemma~7 in~\cite{Bilo25IndexingSubnetworks}]
\label{lem:parent-to-child}
	Let $i \in [K]$ be an index, $F \subseteq E$ a set of $|F| \le f$ edges,
	and $s,t \in V$ vertices such that $P(s,t,F)$ contains at most $L$ edges.
	\vspace*{.25em}
	\begin{enumerate}
		\item \autoref{alg:tree-exact} reaches a leaf of the sampling tree $T_i$ with probab.\
        	at least $((1-(1-p^f)^{\alpha})^h$.
        \vspace*{.25em}
		\item If~\autoref{alg:tree-exact} reaches a leaf $x$ of the tree $T_i$,
			then the probability of the path $P(s,t,F)$ 
            existing in $G_x$ is at least $(1-p^h)^L$.
	\end{enumerate}
\end{lemma}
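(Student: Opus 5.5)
We prove the two parts separately. Both rest on \autoref{lem:tree-property}: conditioned on the set $A_x$ of a node $x$, every edge of $A_x$ enters $A_y$ of a child $y$ independently with probability $p$. Siblings are generated independently of each other.

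For part~1, we track the random walk of \autoref{alg:tree-exact} level by level. Suppose the algorithm stands at a non-leaf node $x$ with $F \subseteq A_x$; this holds at the root since $A_{r_i} = E$. For a fixed child $y$ of $x$, the event $F \subseteq A_y$ requires each of the $|F|$ edges of $F$ (all lying in $A_x$) to survive independently. Its probability is therefore $p^{|F|} \ge p^f$.

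The $\alpha$ children are sampled independently given $A_x$. Hence the probability that no child is active is at most $(1-p^f)^\alpha$, and the algorithm proceeds to depth $d+1$ with probability at least $1-(1-p^f)^\alpha$. This bound holds conditioned on the entire history up to $x$, because the children's randomness is fresh. Chaining the conditional probabilities over the $h$ levels, by the chain rule or an induction on depth, yields $\bigl(1-(1-p^f)^\alpha\bigr)^h$.

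For part~2, condition on the algorithm reaching a leaf $x$ at depth $h$. The subtle point is that reaching $x$ is an event depending only on the membership of the edges of $F$ in the various $A$-sets along the traced path and among the inspected siblings. The edges of $P = P(s,t,F)$ are disjoint from $F$, since $P$ lies in $G{-}F$. By the independence across distinct edges in the sampling process (each edge's inclusion in each node's set is generated by its own coins), the history of $P$'s edges along the root-to-$x$ path is therefore independent of the event that the algorithm traced that path.

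Conditioned on this, each edge $e \in P$ lies in $A_x$ with probability $p^h$, independently across the edges, by \autoref{lem:tree-property} applied to the fixed path. Hence $P \subseteq E\setminus A_x$, i.e.\ $P$ exists in $G_x$, with probability $(1-p^h)^{|P|} \ge (1-p^h)^L$.

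The main obstacle is making the conditioning in part~2 rigorous. Which leaf is reached is random, so we would argue by summing over all possible root-to-leaf paths $\pi$. For each $\pi$, the event that the algorithm traces $\pi$ is measurable with respect to the coins of the edges in $F$ only. The event that $P$ survives along $\pi$ is measurable with respect to the coins of the edges in $P$ only. These two families of coins are independent, so the conditional probability equals $(1-p^h)^{|P|}$ for every $\pi$, and the bound follows.
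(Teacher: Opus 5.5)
Your proof is correct. This paper does not prove the lemma itself; it cites it as Lemma~7 of Bil\`o et al.~\cite{Bilo25IndexingSubnetworks}. Your argument is the natural proof and is complete: for part~1, fresh sibling randomness at every level is chained over the $h$ levels. For part~2, the traced root-to-leaf path is determined by the coins of the edges of $F$, and these are independent of the coins of the edges of $P(s,t,F)$.

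One small point: where you write ``independently across the edges, by \autoref{lem:tree-property}'', that lemma as stated only gives \emph{pairwise} independence of the events $[e \in A_x]$. Pairwise independence is not enough for the product $(1-p^h)^{|P|}$. The needed mutual independence comes from your per-(node, edge) coin model in the last paragraph, so cite that model for the product rather than the lemma.
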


The total number of nodes is $O(K \alpha^h)$
and the query algorithm selects $|\G_F| \le K$ leaf nodes in time $O(fK \alpha h)$. 
The probability to reach a leaf of a tree is exponentially small in $h$ 
(\autoref{lem:parent-to-child}.1).
However, once a leaf is reached, the probability of the stored graph holding
the relevant path $P(s,t,F)$ \emph{grows} with $h$, depending on $p$.
As before, we need to cover (with high probability) 
all pairs of vertices $s,t \in V$ for which $P(s,t,F)$ has at most $L$ edges.
Let $\delta > 0$ be a sufficiently large constant.
Diverging from~\cite{Bilo25IndexingSubnetworks}, we choose the parameters
\vspace*{-.75em}
\begin{multicols}{2}
  \begin{itemize}
	\item $h = \sqrt{f \ln (L/f)}$
	\vspace*{.5em}
	\item $K = \delta \nwspace e^f (\tfrac{e}{e-1})^h  f \ln n$
	\item $\alpha = (L/f)^{f/h}$
	\item $p = (f/L)^{1/h}$
\end{itemize}
\end{multicols}
\vspace*{-.5em}

\autoref{lem:tree-property} states that in any leaf $x$, at depth $h$,
the probability for any edge to be removed (event $[e \in A_x]$) is $p^h = f/L$.
We verify next that the query algorithm indeed computes a suitable collection of graphs.
The lemma also implies that the graphs stored in the leaves of the trees
form an $(L,f)$-replacement path covering with high probability.

\begin{lemma}
\label{lem:correctness}
	W.h.p.\ over all sets $F \subseteq E$ with $|F| \le f$ and $s,t \in V$ with $|E(P(s,t,F))| \le L$,
	upon termination of~\autoref{alg:tree-exact},
	there exists a graph $G_x \in \G_F$ that retains $P(s,t,F)$.
\end{lemma}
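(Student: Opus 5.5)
Fix a query $(s,t,F)$ with $|F|\le f$ and $|E(P(s,t,F))|\le L$, and a single tree $T_i$. The plan is to lower-bound the probability $q$ that $T_i$ succeeds, meaning that \autoref{alg:tree-exact} reaches a leaf $x$ and $G_x$ retains $P(s,t,F)$. Then we use the independence of the $K$ trees to get failure probability $(1-q)^K$. Finally, a union bound over the at most $n^2\binom{m}{\le f}=O(n^{2f+2})$ queries finishes the argument.

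For a single tree I combine the two parts of \autoref{lem:parent-to-child}, since part~2 is conditional on reaching a leaf; this gives
\[
q \ge \bigl(1-(1-p^f)^{\alpha}\bigr)^h (1-p^h)^L .
\]
With our parameters, $p^f=(f/L)^{f/h}=1/\alpha$. Hence $(1-p^f)^\alpha\le e^{-1}$ and the first factor is at least $(1-1/e)^h=(\tfrac{e-1}{e})^h$. For the second factor, $p^h=f/L$, so $(1-f/L)^L$. Since $f=o(\log L)$ we have $f/L\le 1/2$ for large $L$. Using $1-z\ge e^{-z-z^2}$ for $z\le 1/2$, this gives $(1-f/L)^L\ge e^{-f-f^2/L}\ge e^{-f}/2$, say, because $f^2/L=o(1)$. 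Thus $q\ge \tfrac12 e^{-f}(\tfrac{e-1}{e})^h$.

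The one point to check carefully is the conditioning. Whether the algorithm reaches a leaf depends only on the events $[e\in A_y]$ for $e\in F$. The retention of $P$ depends on path edges, which are disjoint from $F$. By \autoref{lem:tree-property}, membership of distinct edges is independent, so conditioning on the traced path does not bias the path edges. Part~2 of \autoref{lem:parent-to-child} already asserts exactly this conditional bound, so I simply invoke it. Note also that the random choices of different trees are independent.

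Given this, the $K$ trees fail simultaneously with probability at most
\[
(1-q)^K\le \exp(-qK)\le \exp\bigl(-\tfrac{\delta}{2}f\ln n\bigr)=n^{-\delta f/2},
\]
using $K=\delta e^f(\tfrac{e}{e-1})^h f\ln n$. We have $n^2 m^f\le n^{2f+2}\le n^{4f}$ queries. A union bound then gives total failure probability at most $n^{4f-\delta f/2}\le n^{-c}$ for $\delta$ a large enough constant. Whenever tree $T_i$ succeeds, its leaf $G_x$ is added to $\G_F$ in \autoref{line:result} and retains $P(s,t,F)$, which proves the claim.

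The main obstacle is only the constant bookkeeping in the second factor $(1-f/L)^L\approx e^{-f}$. Its loss must be absorbed by the $e^f$ in $K$, which requires $f^2/L$ to be bounded; this holds under $f=o(\log L)$.
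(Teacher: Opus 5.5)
Your proof is correct and follows the paper's own argument. You combine both parts of \autoref{lem:parent-to-child} to get a per-tree success probability of at least $\tfrac{1}{2}e^{-f}(1-\tfrac{1}{e})^h$, amplify over the $K$ independent trees to failure probability $n^{-\delta f/2}$, and take a union bound over the $O(n^{2f+2})$ triples. The only differences are cosmetic: you bound $(1-f/L)^L$ via $1-z\ge e^{-z-z^2}$ instead of the Motwani--Raghavan inequality, and you spell out the independence behind the conditioning, which the paper leaves implicit.
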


\begin{proof}
	By~\autoref{lem:parent-to-child},
	the probability to reach a leaf $x$ in some tree $T_i$ whose corresponding graph $G_x$
	contains the replacement path $P(s,t,F)$ is at least $((1-(1-p^f)^{\alpha})^h \cdot (1-p^h)^L$.
	We insert the parameters into the first factor and apply a lower bound of the form
	$(1 {+} \frac{t}{k})^k \ge (1 {-}\frac{t^2}{k}) \nwspace e^{t}$ for $k \ge 1$ and $k \ge |t|$.
	It can, for example, be found in the textbook
	by Motwani and Raghavan~\cite[p.~435]{Motwani95RandomizedAlgorithms}.
	Recall that we assume $f = o(\log L)$; for now, it will be sufficient that $2f^2 \le L$.
	Regarding the second factor, we have
		$(1-p^h)^L = \left(1-\frac{f}{L} \right)^L 
			\ge \left(1- \frac{f^2}{L} \right) \cdot \frac{1}{e^f}
			\ge \frac{1}{2e^f}$
	Further, observe that 
		$\left( 1- \left(1-p^f \right)^{\alpha} \right)^h 
			= \left(1-\left(1-\frac{1}{(L/f)^{f/h}} \right)^{(L/f)^{f/h}} \right)^h 
			\ge \left(1- \frac{1}{e} \right)^h$.
	Let $C$ abbreviate $\frac{e}{e-1} \approx 1.582$. 
	We have thus shown that the probability is at least
	$1/(2e^f \nwspace C^h)$.
	
	Repeating the query in $K = \delta \nwspace e^f C^h f \ln n$
	independent trees reduces the failure probability for any triple $(s,t,F)$ to
		$\left(1- \frac{1}{2 e^f \nwspace C^h} \right)^{\! \delta \nwspace  e^f C^h f \ln n}
			\!\!\!\le \left(\frac{1}{e}\right)^{\!\frac{\delta}{2}  \ln n} 
			\!\!= n^{-\frac{\delta f}{2}}$.
	There are at most $|V^2 \times \binom{E}{\le f}| = O(n^{2+2f})$ relevant triples $(s,t,F)$,
	which shows that choosing $\delta$ large enough and taking a union bound over all triples
	ensures a high success probability.
\end{proof}

Note that
	$C^h = C^{\sqrt{f \ln (L/f)}} 
    		= \Big(\left(\frac{L}{f}\right)^{\frac{\ln C}{\ln (L/f)}} \Big)^{\sqrt{f \ln (L/f)}}
	    = \Big(\frac{L}{f}\Big)^{\ln C \sqrt{\frac{f}{\ln (L/f)}}} 
	    = \Big(\frac{L}{f}\Big)^{o(1)}$.
The last estimate uses that $f = o(\log L)$ also gives $f = o(\ln (L/f))$. 
Similarly, we have 
\begin{equation*}
    \alpha = \Big(\frac{L}{f}\Big)^{f/h} = \Big(\frac{L}{f}\Big)^{\sqrt{f/\ln (L/f)}} = \Big(\frac{L}{f}\Big)^{o(1)}.
\end{equation*}
Our choice of parameters thus shows that the whole data structure stores 
$$O(K \alpha^h) = O(f e^f C^h (L/f)^f \ln n) = O(f e^f (L/f)^{f+o(1)} \ln n)$$
graphs and has a query time of
$O(fKh\alpha) = O(f^\frac{5}{2} e^f(L/f)^{o(1)} \sqrt{\ln (L/f)}  \ln n)$.

The size of the subfamily $\G_F$ is $K = O(fe^f(L/f)^{o(1)} \ln n)$ in the worst case.
However, with high probability the number of subgraphs relevant to a query $F$ is smaller.
In~\autoref{lem:correctness}, we showed that in any tree $T_i$ the probability of finding a leaf node $x$ satisfying $F \subseteq A_x$ is $(1{-}\frac{1}{e})^h = 1/C^h$. Thus, the expected size of ${\cal G}_F$ is $K/C^h=O(fe^f\log n)$. Further recall that each $T_i$ adds a graph to ${\cal G}_F$ independently of other trees. 
Applying a standard Chernoﬀ bound
shows that also with high probability $\G_F$ contains $O(fe^f\log n)$ graphs.

\section{Lower Bound} 
\label{sec:lower}

We now establish the lower bound on the covering value.
Slightly abusing notation, 
we show that there exists families of graphs for which any $(L{+}1,f)$-replacement path covering
must contain at least $\sum_{i=0}^{f-1} \binom{L-1}{i}$ subgraphs
to fulfill the requirements of~\autoref{def:RPC}.
That means, we set denote the cut-off parameter as $L+1$ (instead of $L$) to ease the notation below.
Our construction is assembled from building blocks which we call \emph{inner trees}.
These are comb-like structures which are connected by iteratively replacing the leafs 
of one inner tree with new, smaller inner trees.
The sizes are arranged in such a way that in the emerging binary tree 
all leaves have the same distance $L$ from the root.

\subsection{Inner Trees}
\label{subsec:lower_construction}

We fix a sensitivity $f$ and cutoff-value $L+1$ for the rest of this subsection.
For a positive integer $k$, an \emph{$(L,k)$-inner tree} is constructed 
from a directed path $P = (y_{L}, \ldots, y_2, y_1)$ with $L-1$ edges. 
The path is rooted at node $y_{L}$. For each $i \in [L]$, we add a directed edge from $y_i$
to a new leaf node $z_i$.
The weight of the edge $(y_i, z_i)$ is set to $i \cdot k$.
The edges in $P$ all have weight zero.
See~\autoref{fig:enter-label} for an illustration.

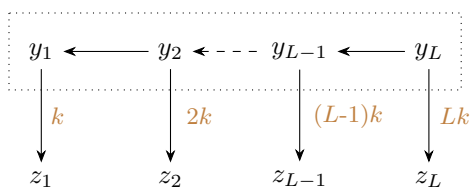
\begin{figure}[t]
\centering~
\begin{tikzpicture}[
    scale=0.85,
    every node/.style={draw=none, minimum size=0cm, anchor=center}
]

\node (yL) at (0,0) {$y_L$};
\node (y_d1) at (-2,0) {$y_{L-1}$};
\node (y2) at (-4,0) {$y_2$};
\node (y1) at (-6,0) {$y_1$};

\node (zL) at (0,-2) {$z_L$};
\node (z_d1) at (-2,-2) {$z_{L-1}$};
\node (z2) at (-4,-2) {$z_2$};
\node (z1) at (-6,-2) {$z_1$};

\draw[<-,>=Stealth] (y_d1) -- (yL);
\draw[<-,>=Stealth,dashed] (y2) -- (y_d1);
\draw[<-,>=Stealth] (y1) -- (y2);

\draw[->,>=Stealth] (y1) -- (z1);
\draw[->,>=Stealth] (y2) -- (z2);
\draw[->,>=Stealth] (yL) -- (zL);
\draw[->,>=Stealth] (y_d1) -- (z_d1);


\node at (-5.75,-1) {\color{brown}\small $k$};
\node at (-3.55,-1) {\color{brown}\small $2k$};
\node at (-1.25,-1) {\color{brown}\small 
$(L$-$1)k$};
\node at (0.4,-1) {\color{brown}\small $Lk$};

\draw[dotted] (-6.5,0.6) rectangle (0.5,-0.6);

\end{tikzpicture}
\caption{Depiction of an $(L,k)$-inner tree}
\label{fig:enter-label}
\end{figure}

We construct a binary tree $T$ in $f$ rounds, numbered from $0$ to $f-1$.
In the $0$-th round, we initialize $T$ as an $(L, L^{2f})$-inner tree.
Its root serves as the root of the whole tree $T$.
In each subsequent round $j$, 
we replace each leaf in $T$ at hop-distance $d$ from the root with an $(L{-}d, L^{2(f-j)})$-inner tree. Note that at the end of round $f-1$,
every root-to-leaf path in the resulting tree has $L$ edges.
See~\autoref{fig:tree-T}.

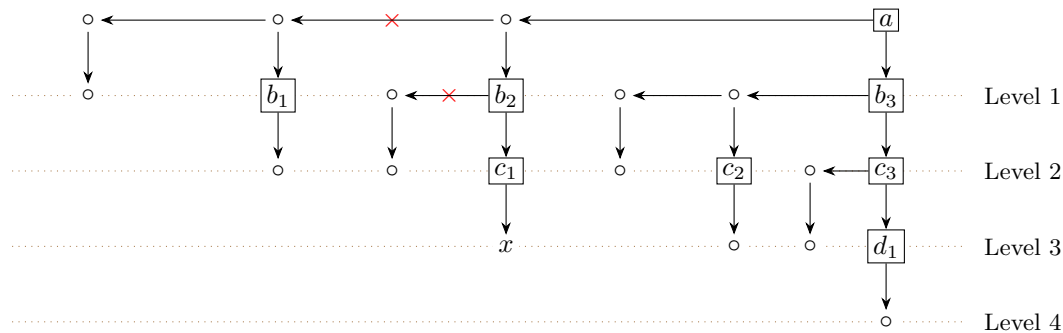
\begin{figure}[b]
\centering
\begin{tikzpicture}[every node/.style={draw=none, minimum size=0cm, fill=white, inner sep=2pt}]

    \node (y1) at (0.5, 0) {$\circ$};
    \node (y2) at (3, 0) {$\circ$};
    \node (y3) at (6, 0) {$\circ$};
    \node[draw] (y4) at (11, 0) {$a$};

\draw[dotted,gray!30!brown] (-0.5, -1) -- (12, -1);
\draw[dotted,gray!30!brown] (-0.5, -2) -- (12, -2);
\draw[dotted,gray!30!brown] (-0.5, -3) -- (12, -3);
\draw[dotted,gray!30!brown] (-0.5, -4) -- (12, -4);
\node(L1) at (12.8, -1) {\small Level $1$};
\node(L2) at (12.8, -2) {\small Level $2$};
\node(L3) at (12.8, -3) {\small Level $3$};
\node(L4) at (12.8, -4) {\small Level $4$};

    \node (z0) at (0.5, -1) {$\circ$};
    \node[draw] (z1) at (3, -1) {$b_1$};
    \node (z2) at (4.5, -1) {$\circ$};
    \node[draw] (z3) at (6, -1) {$b_2$};
    \node (z4) at (7.5, -1) {$\circ$};
    \node (z5) at (9, -1) {$\circ$};
    \node[draw] (z6) at (11, -1) {$b_3$};

    \node (x1) at (3, -2) {$\circ$};
    \node (x2) at (4.5, -2) {$\circ$};
    \node[draw] (x3) at (6, -2) {$c_1$};
    \node (x4) at (7.5, -2) {$\circ$};
    \node (x5) at (7.5, -2) {$\circ$};
    \node[draw] (x6) at (9, -2) {$c_2$};
    \node (x7) at (10, -2) {$\circ$};
    \node[draw] (x8) at (11, -2) {$c_3$};

    \node (w1) at (6, -3) {$x$};
    \node (w2) at (9, -3) {$\circ$};
    \node (w3) at (10, -3) {$\circ$};
    \node[draw] (w4) at (11, -3) {$d_1$};
    
    \node (a1) at (11, -4) {$\circ$};

\node(X0) at (4.5, 0) {\color{red} $\times$};
\node(X1) at (5.25, -1) {\color{red} $\times$};

    \draw[->,>=Stealth] (y4) -- (y3);
    \draw[->,>=Stealth] (y2) -- (y1);
    \draw[->,>=Stealth] (z6) -- (z5);
    \draw[->,>=Stealth]  (z5) -- (z4); 
    \draw[->,>=Stealth] (y3) -- (y2);
    \draw[->,>=Stealth] (z3) -- (z2);
    \draw[->,>=Stealth] (x8) -- (x7);
    
    \draw[->,>=Stealth] (y4) -- (z6);
    \draw[->,>=Stealth] (z6) -- (x8);
    \draw[->,>=Stealth] (x8) -- (w4);
    \draw[->,>=Stealth] (x7) -- (w3);
    \draw[->,>=Stealth] (z5) -- (x6);
    \draw[->,>=Stealth] (x6) -- (w2);
    \draw[->,>=Stealth] (z4) -- (x5);
    \draw[->,>=Stealth] (z1) -- (x1);
    \draw[->,>=Stealth] (z2) -- (x2);
    \draw[->,>=Stealth] (z3) -- (x3);
    \draw[->,>=Stealth] (y2) -- (z1);
    \draw[->,>=Stealth] (y3) -- (z3);
    \draw[->,>=Stealth] (x3) -- (w1);
    \draw[->,>=Stealth] (y1) -- (z0);
    \draw[->,>=Stealth] (w4) -- (a1);    
\end{tikzpicture}
\caption{Depiction of the tree $T$ for parameters $L=4$ and $f=3$. 
The edges marked with a cross are the set $F_x$ for the leaf node $x$ at level $3$.}
\label{fig:tree-T}
\end{figure}

For any $j \in [f]$, let $N(L,j)$ denote the number of leaf nodes in the tree $T$ at level~$j$.
Let further $R(L,j)$ be the number of $j$-tuples $(r_1,\ldots,r_j)$
of positive integers $r_1,\ldots,r_j\geq 1$ that sum to exactly $L$. 
Then, it holds that $R(L,j) = \binom{L-1}{j-1}$.
Observe that $N(L,j)\geq R(L,j)$. This is because each distinct tuple $(r_1,\ldots,r_j)$
maps to a unique leaf of tree $T$ at level $j$.
The mapping is obtained by traversing $r_1$ hops in the first inner tree, 
$r_2$-hop-length in the second inner tree, and so on.
Therefore, $N(L,j) \geq \binom{L-1}{j-1}$. This shows that the total number of leaf nodes in tree $T$ is lower bounded from below by $\sum_{j=1}^f \binom{L-1}{j-1}$.

Let $s$ denote the root of $T$ and $X$ the set of leaves.
We extend $T$ to a digraph $G$ by taking a new sink $v$ and adding an edge
from each leaf node in $X$ to $v$ of weight $1$.
We demonstrate that for any $(L{+}1,f)$-replacement path covering for $G$ 
the size of the family $\G$ is at least $|X|$.

Fix some $x\in X$, we define a set $F_x \subseteq E(G)$ of at most $f$ edges as follows.
Let $\tau_{1},\ldots,\tau_{\ell}$ be the inner trees that are traversed
by the path from $s$ to $x$ in the tree $T$. 
For each $i \in [\ell]$, let $y_{i}$ be the ancestor of $x$
that lies in $\tau_{i}$ and is furthest from $s$.
The set $F_x$ is then formed by taking those out-edges of the nodes $y_{1}, \ldots, y_{\ell}$
that do \emph{not} lie on the unique path from $s$ to $x$ in $T$.
Let $T[s, x]$ denote this path.
Clearly, we have $|F_x| \le \ell \le f$.
Moreover, the edge weights are so that traveling inside one level is free.
It is always preferred to descend to the next level as far from the 
root of the current inner tree as possible.
Therefore, after the failures, $T[s,x]$ is the cheapest way to reach a leaf of $T$,
whence the concatenation $T[s, x] \circ (x,v)$
is the unique shortest path from $s$ to $v$ in the graph $G{-}F_x$.
The path has $L+1$ edges.

For each leaf $x\in X$ of $T$, let $G_x\in \G$ be a graph that avoids the edges in $F_x$ but still contains the replacement path $T[s, x] \circ (x,v)$. 
For any two distinct $x, x' \in X$, we assert that $G_x \neq G_{x'}$. This is due to the fact that if $G_x$ and $G_{x'}$ were the same, both would include the out-edges of the lowest common ancestor $w = \operatorname{LCA}_T(x, x')$. 
However, this is a contradiction since at least one of the out-edges of $w$ 
is included in $F_x$ or $F_{x'}$.
From this, we get our lower bound on the covering value $|\G| \ge |X| \ge \sum_{j=1}^f \binom{L-1}{j-1}$,
proving~\autoref{thm:lower-bound}.

\subsection{Proof of \texorpdfstring{\autoref{cor:lb}}{Corollary~4}}
\label{sec:lower_corollary}

When adjusting the cut-off value back to $L$ (from $L{+}1$) and parameterizing the binomial coefficient
in terms of $0 \le i \le f{-}1$ instead of $j \in [f]$, we get the sum $\sum_{i=0}^{f-1} \binom{L-2}{i}$.
We give a closed form for it below to prove~\autoref{cor:lb}.

\corlb*

The second clause is immediate from the observation that $f \ge \frac{L}{2}$ 
(that is, $f{-}1 \ge \frac{L-2}{2}$)
implies that summing the first $f-1$ binomial coefficients gives at least half the value of the sum over all $L-2$ terms.
\begin{equation*}
	\sum_{i=0}^{f-1} \binom{L-2}{i} \ge \frac{1}{2} \cdot \sum_{i=0}^{L-2} \binom{L-2}{i} = \frac{1}{2} \cdot 2^{L-2} = \Omega(2^L).
\end{equation*}

In the rest of the section, we prove the first clause.
If $f$ is bounded away from $L/2$,
meaning $f \le (1{-}\varepsilon) \frac{L}{2}$ for some positive constant $\varepsilon >0$,
the binomial coefficient $\binom{L-2}{f-1}$ is the largest term of the sum.
We bound that term using Stirling's approximation.
The following lemma can, for example, be found in the textbook by Cover and Thomas~\cite{Cover06InformationTheory}.

\begin{lemma}[Stirling's approximation]
\label{lem:Stirling}
	Let $k$ be a positive integer and $0 < x < 1$ a rational number such that $xk$ is integral.
	Then, it holds that
	\begin{equation*}
		\binom{k}{xk} \ge \frac{1}{\sqrt{8k \nwspace x(1{-}x)}} 
			\left(\frac{1}{x}\right)^{xk} \left(\frac{1}{1-x}\right)^{(1-x)k}.
	\end{equation*}
\end{lemma}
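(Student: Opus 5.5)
The plan is to prove \autoref{lem:Stirling} directly from the sharp two-sided form of Stirling's formula due to Robbins: for every positive integer $n$,
\begin{equation*}
	\sqrt{2\pi n}\,\Big(\frac{n}{e}\Big)^{n} e^{\frac{1}{12n+1}} \;\le\; n! \;\le\; \sqrt{2\pi n}\,\Big(\frac{n}{e}\Big)^{n} e^{\frac{1}{12n}}.
\end{equation*}
Write $m = xk$ and $r = (1{-}x)k$. Both are positive integers because $0<x<1$ and $xk$ is integral, and $m + r = k$. I will write $\binom{k}{m} = k!/(m!\,r!)$. Then I apply the lower bound to $k!$, dropping the factor $e^{1/(12k+1)} \ge 1$, and the upper bound to both $m!$ and $r!$.

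The factors $e^{-k}$, $e^{m}$, $e^{r}$ cancel since $m+r=k$. The main power terms give
\begin{equation*}
	\frac{k^k}{m^m r^r} = \Big(\frac{1}{x}\Big)^{xk}\Big(\frac{1}{1-x}\Big)^{(1-x)k}.
\end{equation*}
The square-root prefactors combine to $\sqrt{2\pi k}/(2\pi\sqrt{mr}) = 1/\sqrt{2\pi k\,x(1{-}x)}$. So it remains to check
\begin{equation*}
	\frac{e^{-\frac{1}{12m}-\frac{1}{12r}}}{\sqrt{2\pi}} \ge \frac{1}{\sqrt 8},
\end{equation*}
which is equivalent to $e^{-\frac{1}{12m}-\frac{1}{12r}} \ge \sqrt{\pi/4} \approx 0.886$.

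If $\min\{m,r\} \ge 2$, the exponent is at most $1/24+1/24 = 1/12$, and $e^{-1/12} \approx 0.920$ suffices. The main obstacle is the boundary case $\min\{m,r\}=1$: there the correction $e^{-1/12}$ from the factorial of $1$ is too costly. For instance, $m=r=1$ gives $e^{-1/6}\approx 0.846$, which is too small.

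By the symmetry $x \leftrightarrow 1{-}x$, I treat $m=1$ by hand. Then $\binom{k}{1}=k$, and the claim reduces to
\begin{equation*}
	\Big(1+\frac{1}{k-1}\Big)^{k-1} \le \sqrt{8(k{-}1)/k} \quad \text{for all } k\ge 2.
\end{equation*}
For $k=2$ this holds with equality ($2 \le 2$), and for $k=3$ it reads $2.25 \le 2.309$. For $k\ge 4$, the left side is at most $e \approx 2.718$ and the right side is at least $\sqrt 6 \approx 2.449$. That is not enough on its own, so for larger $k$ I use that the left side is increasing in $k$ with limit $e$, while the right side is increasing with limit $\sqrt8 \approx 2.828 > e$.

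To close the gap at small $k$, I would verify $k=4,5,6$ numerically:
\begin{itemize}
	\item $k=4$: $2.370 \le 2.449$;
	\item $k=5$: $2.441 \le 2.530$;
	\item $k=6$: $2.488 \le 2.582$.
\end{itemize}
For $k \ge 7$, the right side is at least $\sqrt{48/7} \approx 2.619$. Here I would use the bound $(1+\frac1n)^n \le e\,(1-\frac{1}{2n+2})$, with $n = k-1$, to show the left side stays below the right side.
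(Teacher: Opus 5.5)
Your proof is correct and follows the standard route. The paper gives no proof of its own; it cites Cover and Thomas, whose proof is this same computation with Robbins' bounds. That computation reduces everything to $e^{-\frac{1}{12m}-\frac{1}{12r}} \ge \sqrt{\pi}/2$ plus a few cases checked by hand.

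One step needs tightening. For $k \ge 7$ you cannot compare the bound $e\,(1-\frac{1}{2k})$ on the left side against the constant $\sqrt{48/7}\approx 2.619$, because $e\,(1-\frac{1}{2k})$ exceeds that constant once $k \ge 14$. You have to compare the two sides pointwise instead. After squaring, $e^2(1-\frac{1}{2k})^2 \le 8(1-\frac{1}{k})$ holds for all $k \ge 3$.

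More simply, and as Cover and Thomas do it, the case $m=1$, $r\ge 3$ needs no separate treatment at all. There $\frac{1}{12m}+\frac{1}{12r} \le \frac{1}{12}+\frac{1}{36}=\frac{1}{9}$, and $e^{-1/9}\approx 0.895 > \sqrt{\pi}/2\approx 0.886$. So only $(m,r)\in\{(1,1),(1,2)\}$, i.e.\ $k\in\{2,3\}$, must be checked by hand, and you have already done that.
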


We apply \autoref{lem:Stirling} with $k = L-2$ and $xk = f-1$,
where we use that $f\ge 2$ and thus $x = (f{-}1)/(L{-}2) > 0$.
We lower bound the three factors on the right-hand side separately.
\begin{equation*}
	\left( \frac{1}{x}\right)^{xk} = \left( \frac{L-2}{f-1}\right)^{f-1} 
		= \left( \frac{L-2}{L}\right)^{f-1}	\left( \frac{L}{f-1}\right)^{f-1}.
\end{equation*}
It holds that $( \tfrac{L-2}{L})^{f-1} \ge  (\tfrac{L-2}{L})^{\frac{L}{2}} \ge 
(\tfrac{3}{4})^{2}$.
For the last estimate, we use that $(\tfrac{L-2}{L})^{\frac{L}{2}}$ is increasing\\[.25em]
in $L$ and that $L \ge 4$ due to $\frac{L}{2}(1-\varepsilon) \ge f \ge 2$.
This proves $x^{-xk} = \Omega \big( ( \tfrac{L}{f-1} )^{f-1} \big)$.

Next, observe that for $x < 1$, we have $(1-x)^{-1} \ge e^x$,
which can be derived from the arguably more common inequality $1-x \le e^{-x}$.
Since $f \le L/2$, we get
	$1-x = \frac{L-f-1}{L - 2} \ge \frac{1}{2}$.
Therefore, it holds that
	$\left(\frac{1}{1-x}\right)^{(1-x)k} \ge \exp(x \cdot (1{-}x)k) 
		\ge \exp\!\left(\! \frac{f-1}{2} \!\right) = \Omega\big(\sqrt{e^{f}} \nwspace \big)$.
Finally, we have 
	$(8k \nwspace x(1{-}x))^{-1/2} \ge (4 xk)^{-1/2} 
		= \Omega\!\left(\frac{1}{\sqrt{f-1}}\right)$.

Combining the three estimates results in
\begin{multline*}
	\binom{L-2}{f-1} = 
		\Omega\!\left( \frac{1}{\sqrt{f{-}1}} \left( \frac{L}{f{-}1} \right)^{f-1} 
			\sqrt{e^{f}} \right)\\
		 = \Omega\!\left( \sqrt{(f{-}1) \, e^{f}} \cdot  \frac{L^{f-1}}{(f{-}1)^f} \right)
		 = \Omega\!\left( \sqrt{f e^{f}} \cdot  \frac{L^{f-1}}{f^f} \right).
\end{multline*}

\bibliography{ref}

\end{document}